\newenvironment{proof}{\paragraph{Proof.}}{\hfill$\square$}
\newcommand{\play}{\mathit{play}}
\newcommand{\turn}{\mathit{turn}}
\newcommand{\leaf}{\mathit{leaf}}
\newcommand{\spe}{\mathit{SPE}}
\newcommand{\ibar}{-i}
\newcommand{\wmax}{\mathit{win}}
\newcommand{\lose}{\mathit{lose}}
\newcommand{\pmax}{\mathit{max}}
\newcommand*{\rom}[1]{\expandafter\@slowromancap\romannumeral #1@}
\title{Iterated Elimination of Weakly Dominated Strategies in Well-Founded Games}
\author{Krzysztof R. Apt
	\institute{Centrum Wiskunde \& Informatica\\ Amsterdam, The Netherlands}
	\institute{University of Warsaw\\ Warsaw, Poland}
    \email{k.r.apt@cwi.nl}
	\and
        Sunil Simon
	\institute{Department of CSE,\\ IIT Kanpur, Kanpur, India} 
	\email{simon@cse.iitk.ac.in}
}
\begin{document}

\newcommand{\Proof}{\NI
                    {\bf Proof.}\ }

\newtheorem{theorem}{Theorem}
\newtheorem{defined}[theorem]{Definition}
\newenvironment{definition}{\begin{defined} \rm}{\end{defined}}
\newtheorem{exa}[theorem]{Example}
\newenvironment{example}{\begin{exa} \rm}{\end{exa}}
\newtheorem{proposition}[theorem]{Proposition}
\newtheorem{conjecture}[theorem]{Conjecture}
\newtheorem{lemma}[theorem]{Lemma}
\newtheorem{corollary}[theorem]{Corollary}
\newtheorem{remark}[theorem]{Remark}
\newtheorem{note}[theorem]{Note}
\newtheorem{fact}[theorem]{Fact}
\newtheorem{claim}[theorem]{Claim}
\newtheorem{exe}{Exercise}
\newenvironment{exercise}{\begin{exe} \rm }{\end{exe}}
\newtheorem{pro}{Problem}
\newenvironment{problem}{\begin{pro} \rm }{\end{pro}}

\newcommand{\myparpic}[1]{\parpic{{\darkgray{#1}}}}
\newcommand{\bfei}[1]{\begin{bfseries}\emph{#1}\end{bfseries}\index{#1}}
\newcommand{\bfe}[1]{\begin{bfseries}\emph{#1}\end{bfseries}}
\newcommand{\inv}{\invisible}
\newcommand{\T}{\mbox{{\bf true}}}
\newcommand{\F}{\mbox{{\bf false}}}
\newcommand{\ES}{\mbox{$\emptyset$}}

\newcommand{\myla}{\mbox{$\:\leftarrow\:$}}
\newcommand{\myra}{\mbox{$\:\rightarrow\:$}}
\newcommand{\da}{\mbox{$\:\downarrow\:$}}
\newcommand{\ua}{\mbox{$\:\uparrow\:$}}
\newcommand{\La}{\mbox{$\:\Leftarrow\:$}}
\newcommand{\Ra}{\mbox{$\:\Rightarrow\:$}}
\newcommand{\tra}{\mbox{$\:\rightarrow^*\:$}}
\newcommand{\lra}{\mbox{$\:\leftrightarrow\:$}}
\newcommand{\bu}{\mbox{$\:\bullet \ \:$}}
\newcommand{\up}{\mbox{$\!\!\uparrow$}}

\newcommand{\A}{\mbox{$\ \wedge\ $}}
\newcommand{\bigA}{\mbox{$\bigwedge$}}
\newcommand{\Orr}{\mbox{$\ \vee\ $}}
\newcommand{\U}{\mbox{$\:\cup\:$}}
\newcommand{\I}{\mbox{$\:\cap\:$}}
\newcommand{\sse}{\mbox{$\:\subseteq\:$}}

\newcommand{\po}{\mbox{$\ \sqsubseteq\ $}}
\newcommand{\spo}{\mbox{$\ \sqsubset\ $}}
\newcommand{\rpo}{\mbox{$\ \sqsupseteq\ $}}
\newcommand{\rspo}{\mbox{$\ \sqsupset\ $}}

\newcommand{\Mo}{\mbox{$\:\models\ $}}
\newcommand{\mtwo}{\mbox{$\:\models_{\it 2}\ $}}
\newcommand{\mthree}{\mbox{$\:\models_{\it 3}\ $}}
\newcommand{\Mf}{\mbox{$\:\models_{\it f\!air}\ $}}
\newcommand{\Mt}{\mbox{$\:\models_{\it tot}\ $}}
\newcommand{\Mwt}{\mbox{$\:\models_{\it wtot}\ $}}
\newcommand{\Mp}{\mbox{$\:\models_{\it part}\ $}}

\newcommand{\PR}{\mbox{$\vdash$}}

\newcommand{\fa}{\mbox{$\forall$}}
\newcommand{\te}{\mbox{$\exists$}}
\newcommand{\fai}{\mbox{$\stackrel{\infty}{\forall}$}}
\newcommand{\tei}{\mbox{$\stackrel{\infty}{\exists}$}}

\newcommand{\calA}{\mbox{$\cal A$}}
\newcommand{\calG}{\mbox{$\cal G$}}
\newcommand{\calB}{\mbox{$\cal B$}}

\newcommand{\LLn}{\mbox{$1,\ldots,n$}}
\newcommand{\LL}{\mbox{$\ldots$}}

\newcommand{\C}[1]{\mbox{$\{{#1}\}$}}           
\newcommand{\NI}{\noindent}
\newcommand{\HB}{\hfill{$\Box$}}
\newcommand{\VV}{\vspace{5 mm}}
\newcommand{\III}{\vspace{3 mm}}
\newcommand{\II}{\vspace{2 mm}}
\newcommand{\PP}{\mbox{$[S_1 \| \LL \| S_n]$}}

\newcommand{\X}[1]{\mbox{$\:\stackrel{{#1}}{\rightarrow}\:$}}
\newcommand{\lX}[1]{\mbox{$\:\stackrel{{#1}}{\longrightarrow}\:$}}

\newenvironment{mydef}{\begin{boxitpara}{box 0.95 setgray fill}}{\end{boxitpara}}

\newcommand{\vect}[1]{{\bf #1}}
\newcommand{\D}[1]{\mbox{$|[{#1}]|$}}           
\newcommand{\hb}[1]{{\Theta_{#1}}}
\newcommand{\hi}{{\cal I}}
\newcommand{\ran}{\rangle}
\newcommand{\lan}{\langle}
\newcommand{\dom}{{\it Dom}}
\newcommand{\var}{{\it Var}}
\newcommand{\pred}{{\it pred}}
\newcommand{\sem}[1]{{\mbox{$[\![{#1}]\!]$}}}
\newcommand{\false}{{\it false}}
\newcommand{\true}{{\it true}}
\newtheorem{Property}{Property}[section]
\newcommand{\hsuno}{\hspace{ .5in}}
\newcommand{\vsuno}{\vspace{ .25in}}
\newcommand{\cons}{$\! \mid $}
\newcommand{\nil}{[\,]}
\newcommand{\restr}[1]{\! \mid \! {#1}} 
\newcommand{\range}{{\it Ran}}
\newcommand{\Range}{{\it Range}}
\newcommand{\PC}{\mbox{$\: \simeq \:$}}
\newcommand{\TC}{\mbox{$\: \simeq_t \:$}}
\newcommand{\IO}{\mbox{$\: \simeq_{i/o} \:$}}
\newcommand{\size}{{\rm size}}
\newcommand{\Der}[2]
        {\; |\stackrel{#1}{\!\!\!\longrightarrow _{#2}}\; }

\newcommand{\Sder}[2]
        {\; |\stackrel{#1}{\!\leadsto_{#2}}\; }

\newcommand{\szkew}[1]{\relax \setbox0=\hbox{\kern -24pt $\displaystyle#1$\kern 0pt }%
\box0}
{\catcode`\@=11 \global\let\ifjusthvtest@=\iffalse}

\newcommand{\iif}{\mbox{$\longrightarrow$}}
\newcommand{\longra}{\mbox{$\longrightarrow$}}
\newcommand{\Longra}{\mbox{$\Longrightarrow$}}
\newcommand{\res}[3]{\mbox{$#1\stackrel{\textstyle #2}{\Longra}#3$}}
\newcommand{\reso}[4]{\mbox{
$#1 \stackrel{#2}{\Longra}_{\hspace{-4mm}_{#3}} \hspace{2mm} #4$}}
\newcommand{\nextres}[2]{\mbox{$\stackrel{\textstyle #1}{\Longra}#2$}}
\newcommand{\preres}[2]{\mbox{$#1\stackrel{\textstyle #2}{\Longra}$}}
\newcommand{\Preres}[2]{\mbox{$#1\stackrel{\textstyle #2}{\longra}$}}

\newcounter{oldmycaption}
\newcommand{\oldmycaption}[1]
{\begin{center}\addtocounter{oldmycaption}{1}
{{\bf Program \theoldmycaption} : #1}\end{center}} 

\newcommand{\mycite}[1]{\cite{#1}\glossary{#1}}

\newcommand{\ass}{{\cal A}}
\newcommand{\defi}{{\stackrel{\rm def}{=}}}

\newcommand{\p}[2]{\langle #1 \ ; \ #2 \rangle}

\newcommand{\ceiling}[1]{\lceil #1 \rceil}
\newcommand{\adjceiling}[1]{\left \lceil #1 \right \rceil}
\newcommand{\floor}[1]{\lfloor #1 \rfloor}
\newcommand{\adjfloor}[1]{\left \lfloor #1 \right \rfloor}

\maketitle

\begin{abstract}
  Recently, in \cite{AS21}, we studied well-founded games, a natural
  extension of finite extensive games with perfect information in which all plays are finite.
  We extend here, to this class of games, two results concerned with
  iterated elimination of weakly dominated strategies, originally
  established for finite extensive games.

  The first one states that every finite extensive game with perfect
  information and injective payoff functions can be reduced by a specific
  iterated elimination of weakly dominated strategies to a trivial
  game containing the unique subgame perfect equilibrium. Our
  extension of this result to well-founded games admits transfinite
  iterated elimination of strategies.  It applies to an infinite
  version of the centipede game.  It also generalizes the original
  result to a class of finite games that may have several subgame perfect equilibria.

  The second one states that finite zero-sum games with $n$ outcomes
  can be solved by the maximal iterated elimination of weakly
  dominated strategies in $n-1$ steps.  We generalize this result to
  a natural class of well-founded strictly competitive games.
\end{abstract}

 \section{Introduction}
\label{sec:intro}
This paper is concerned with the iterated elimination of weakly
dominated strategies (IEWDS) in the context of natural class of
infinite extensive games with perfect information.
While simple examples show that the deletion of weakly dominated
strategies may result in removal of a unique Nash equilibrium,
IEWDS has some merit if it results in solving a game.  It is for instance used to show 
that the so-called ``beauty contest'' game has exactly one Nash equilibrium
(see, e.g., \cite[Chapter 5]{Hei12}).  Other games can be solved this
way, see, e.g., \cite[pages 63, 110-114]{OR94}.

This procedure was also studied in the realm of finite extensive games
with perfect information.  In \cite{Hum08} the correspondence between
the outcomes given by the iterated elimination of weakly dominated
strategies and backward induction was investigated in the context of
binary voting agendas with sequential voting. More recently, this procedure
was studied in \cite{Sob19} in the context of supermodular games.

For arbitrary games two
important results were established.  The first one states, see
\cite{OR94}, that in such games with injective payoff functions (such
games are sometimes called \emph{generic}) a specific iterated
elimination of weakly dominated strategies (that mimics the backward
induction) yields a trivial game which contains the unique subgame
perfect equilibrium.  It was noticed in \cite{Bat97} that this result
holds for a slightly more general class of games \emph{without
  relevant ties}.\footnote{All mentioned concepts are explained in
  Sections \ref{sec:preliminaries}, \ref{sec:main}, and \ref{sec:strictly}.  
  We did not find any precise proofs in the literature.  The proof is briefly
  sketched in \cite[pages 108-109]{OR94} and summarized in \cite[pages
  48-49]{Bat97} as follows: ``if backward induction deletes action $a$
  at node $x$, delete all the strategies reaching $x$ and choosing
  $a$''.  We provided in \cite{AS21a} a detailed proof of the stronger
  result of \cite{Bat97} in which we clarified how the backward
  induction algorithm needs to be modified to achieve the desired
  outcome.}

The second result, due to \cite{Ewe02}, is concerned with finite
extensive zero-sum games.  It states that such games can be reduced to
a trivial game by the `maximal' iterated elimination of weakly
dominated strategies in $n-1$ steps, where $n$ is the number of
outcomes.\footnote{An alternative proof given in \cite{Ost05} shows
that the result holds for the larger class of strictly competitive games. In
\cite{AS21a} we clarified that the original proof also holds for this class of games.}

In \cite{AS21} we studied a natural extension of finite extensive
games with perfect information in which one assumes that all plays are
finite. We called these games well-founded games.\footnote{In the
  economic literature such games are sometimes called `games with
  finite horizon'.}  The subject of this paper is to extend the above
two results to well-founded games.
In both cases some non-trivial difficulties arise.

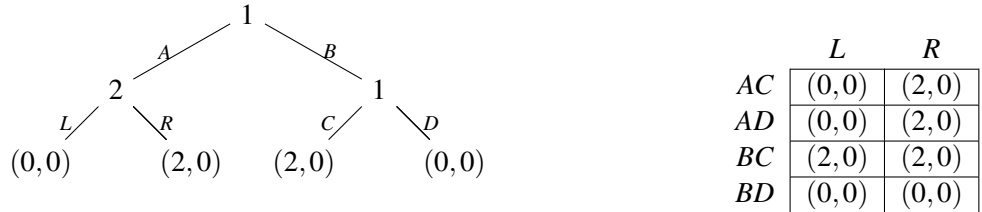
\begin{figure}[h]
\centering
\begin{minipage}{.75\textwidth}
  \centering  
  \tikzstyle{level 1}=[level distance=1cm, sibling distance=3.5cm]
  \tikzstyle{level 2}=[level distance=1cm, sibling distance=2cm]
  \tikzstyle{level 3}=[level distance=1.5cm, sibling distance=2cm]
\begin{tikzpicture}
 \node (r){1}
 child{
   node (a){2}
   child{
     node (d){$(0, 0)$}
     edge from parent
     node[left]{\scriptsize $L$}
   }
   child{
     node(e){$(2,0)$}
     edge from parent
     node[right]{\scriptsize $R$}
     edge from parent
   }
   edge from parent
   node[left]{\scriptsize $A$}
   }
 child{
   node (b){1}
   child{
     node (f){$(2,0)$}
     edge from parent
     node[left]{\scriptsize $C$}
   }
   child{
     node (g){$(0,0)$}
     edge from parent
     node[right]{\scriptsize $D$}
     edge from parent
   }
   edge from parent
   node[right]{\scriptsize $B$}
 };
\end{tikzpicture}

\captionof{figure}{An extensive  game $G$ and the corresponding strategic game $\Gamma(G)$}
  \label{fig:speEquiv}
\end{minipage}%
\begin{minipage}{.25\textwidth}
  \centering
\begin{tabular}{r|c|c|}
\multicolumn{1}{r}{}
 &  \multicolumn{1}{c}{$L$}
 &  \multicolumn{1}{c}{$R$}\\
\cline{2-3}
\textit{AC} & $(0,0)$ & $(2,0)$\\
\cline{2-3}
\textit{AD} & $(0,0)$ & $(2,0)$\\
\cline{2-3}
\textit{BC} & $(2,0)$ & $(2,0)$\\
\cline{2-3}
\textit{BD} & $(0,0)$ & $(0,0)$\\
\cline{2-3}
\end{tabular}
  
\end{minipage}
\end{figure}

\begin{example}
\label{ex:speEquiv}
Consider the extensive game $G$ and the corresponding strategic game
$\Gamma(G)$ given in Figures \ref{fig:speEquiv}. $G$ has three subgame perfect
equilibria which are all payoff equivalent: $\{(\mathit{AC},R),
(\mathit{BC},L), (\mathit{BC},R)\}$.
We can observe that in $\Gamma(G)$ no sequence of
iterated elimination of weakly dominated strategies results in a
trivial game that contains all the subgame perfect equilibria in
$G$. To see this, first note that the strategies $L$ and $R$ of player
2 are never weakly dominated irrespective of the elimination done with
respect to the strategies of player 1. Also, note that the strategy
$\mathit{BD}$ of player 1 is strictly dominated by $\mathit{BC}$ in
$\Gamma(G)$. Thus the only possibility of reducing $\Gamma(G)$ to a
trivial game is to eliminate all strategies of player 1 except
$\mathit{BC}$. But this results in the elimination of
$(\mathit{AC}, R)$ which is a subgame perfect equilibrium in $G$.
\HB
\end{example}

This might suggest that one should limit oneself to extensive games
with a unique subgame perfect equilibrium.  Unfortunately, this
restriction does not work either as shown in Example
\ref{ex:1spe}. Additional complication arises when the game has no
subgame perfect equilibrium as shown in \ref{ex:nospe}.

\begin{figure}[h]
\centering
\begin{minipage}{.45\textwidth}
  \centering  
\tikzstyle{level 1}=[level distance=1.5cm, sibling distance=2.5cm]
  \tikzstyle{level 2}=[level distance=1.5cm, sibling distance=1.5cm]
  \tikzstyle{level 3}=[level distance=1.5cm, sibling distance=2cm]
\begin{tikzpicture}
 \node (r){1}
 child{
   node (a){$(0,100)$}
   edge from parent
   node[left]{\scriptsize $0$}
   }
 child{
   node (b){$(x,100-x)$}
   edge from parent
   node[left]{\scriptsize $x$}
 }
  child{
   node (c){2}
   child{
     node (h){$(100, 0)$}
     edge from parent
     node[left]{\scriptsize $L$}
   }
   child{
     node (i){$(0,0)$}
     edge from parent
     node[right]{\scriptsize $R$}
     edge from parent
   }
  edge from parent
  node[right]{\scriptsize $100$}};

  \path (a) -- (b) node [midway] {$\cdots$};
  \path (b) -- (c) node [midway] {$\cdots$};
\end{tikzpicture}

\captionof{figure}{A game $G$ with a unique SPE}
\label{fig:ultimatum1}
\end{minipage}%
\begin{minipage}{.5\textwidth}
  \centering
\begin{tikzpicture}
    [
 scale=1.5,font=\footnotesize,
 level 1/.style={level distance=10mm,sibling distance=15mm},
 level 2/.style={level distance=10mm,sibling distance=8mm},
]
 \node (r){1}
 child{
   node (a){$(0,0)$}
   edge from parent
   node[left]{$A$}
 }
   child{
   node (a){$(0,1)$}
   edge from parent
   node[left]{$B$}
   }
 child{
   node (b){$2$}
   child{
     node (f){$(0,0)$}
     edge from parent
     node[left]{}
   }
   child{
     node (g){$(0,1)$}
     edge from parent
     node[right]{}
   }
   child{
     node (f){$(0,2)$}
     edge from parent
     node[left]{}
   }
   child {
     node {$\ldots$}  edge from parent[draw=none]
     node[right]{$\cdots$}}
       edge from parent
   node[right]{ $C$}
 };
\end{tikzpicture}
\captionof{figure}{A game $G$ with no SPE}
  \label{fig:nospe}
\end{minipage}
\end{figure}

\begin{example}
\label{ex:1spe}

Consider a `trimmed version' of the ultimatum game from \cite{AS21}
given in Figure \ref{fig:ultimatum1}, in which for each
$x \in [0,100]$ the root has a direct descendant $x$.
This game has a unique subgame perfect equilibrium, namely $(100, L)$.
Consider an iterated elimination of weakly dominated strategies.
For each strategy of player 1 the strategies $L$ and $R$ of player 2
yield the same payoff.  So these two strategies are never eliminated.
Further, strategy 100 of player 1 is never eliminated either, since
for any strategy $x < 100$ we have $p_1(x, L) = x < 100 = p_1(100, L)$
and $p_1(x, R) = x > 0 = p_1(100, R)$. So the joint strategies
$(100, L)$ and $(100, R)$ are never eliminated and they are not payoff
equivalent.
(In fact, each iterated elimination of weakly dominated strategies yields the game with the sets of strategies $\{100\}$ and  $\{L, R\}$.)
\HB
\end{example}

\begin{example}
\label{ex:nospe}
Consider the well-founded game $G$ given in Figure
\ref{fig:nospe}. Clearly $G$ has no subgame perfect
equilibrium. Further, strategies $A$ and $B$ of player 1 yield the
same outcome for him, so cannot be eliminated by any iterated
elimination of weakly dominated strategies. Thus any result
of such an elimination contains at least two outcomes, $(0,0)$ and
$(0,1)$.  So $G$ cannot be reduced to a trivial game.
\HB
\end{example}

To address these issues, we introduce the concept of an
\emph{SPE-invariant} well-founded game. These are games in which
subgame perfect equilibria exist and moreover in each subgame such
equilibria are payoff equivalent. Then we show that the first result
can be extended to such games.  In view of the above examples it
looks like the strongest possible generalization of the original
result. In particular, it applies to an infinite version of the
well-known centipede game of \cite{Ros81}.

This result calls for a careful extension of the iterated elimination
of weakly dominated strategies to infinite games: its stages have to
be indexed by ordinals and one has to take into account that the
outcome can be the empty game.

When limited to finite games, our theorem extends
the original result. In particular it applies to the class of
extensive games that satisfy the \bfe{transference of decisionmaker
  indifference (TDI)} condition due to \cite{MS97}, a class
that includes strictly competitive games.  We also show that the
well-founded games with finitely many outcomes that satisfy the TDI
condition are SPE-invariant. 
Also when extending the second result, about strictly competitive
games, to well-founded games one has to be careful. The original proof
crucially relies on the fact that finite extensive zero-sum games have
a value.  Fortunately, as we showed in \cite{AS21}, well-founded games
with finitely many outcomes have a subgame perfect equilibrium, so a
fortiori a Nash equilibrium, which suffices to justify the relevant
argument (Lemma \ref{lem:h1} in Section \ref{sec:strictly}).

By carefully checking of the crucial steps of the original proof we
extend the original result to a class of well-founded strictly
competitive games that includes \emph{almost constant} games, in
which for all but finitely many leaves the outcome is the same.  It
remains an open problem whether this result holds for all strictly
competitive games with finitely many outcomes.

IEWDS is one of the early approaches applied to analyze strategies and
extensive games. It does not take into account epistemic reasoning of
players in the presence of assumptions such as common knowledge of
rationality. The vast literature on this subject, starting with
\cite{Ber84} and \cite{Pea84}, led to identification of several more
informative ways of analyzing finite extensive games with imperfect
information. We just mention here two representative references.  In
\cite{Bat97} Pearce's notion of \emph{extensive form
rationalizability} (EFR) was studied and it was shown that for
extensive games without relevant ties it coincides with the IEWDS. A
more general notion of common belief in future rationality was studied
in \cite{Per14} that led to identification of a new iterative
elimination procedure called \emph{backward dominance}.

In our paper IEWDS is defined as a transfinite elimination
procedure. A number of papers, starting with \cite{Lip94}, analyzed
when such a transfinite elimination of strategies cannot be reduced to
an iteration over $\omega$ steps. In our framework it is a simple
consequence of the fact that the ranks of the admitted game trees can
be arbitrary ordinals. In particular, an infinite version of the centipede
game considered in Example \ref{exa:centipede}  requires more than $\omega$
elimination rounds.

 \section{Preliminaries}
\label{sec:preliminaries}

\subsection{Strategic games}

A \bfe{strategic game} $H=(H_1, \LL, H_n, p_1, \LL, p_n)$ consists of
a set of players $\{1, \LL, n\}$, where $n \geq 1$, and for each
player $i$, a set $H_i$ of \bfe{strategies} along with a \bfe{payoff
  function} $p_i : H_1 \times \cdots \times H_n \myra \mathbb{R}$.







We call each element of $H_1 \times \cdots \times H_n$ a \bfe{joint
  strategy} of players $1, \LL, n$, denote the $i$th element of
$s \in H_1 \times \cdots \times H_n$ by $s_i$, and abbreviate the
sequence $(s_{j})_{j \neq i}$ to $s_{-i}$. We write $(s'_i, s_{-i})$
to denote the joint strategy in which player's $i$ strategy is $s'_i$
and each other player's $j$ strategy is $s_j$.  Occasionally we write
$(s_i, s_{-i})$ instead of $s$.  Finally, we abbreviate the Cartesian
product $\times_{j \neq i} H_j$ to $H_{-i}$.

Given a joint strategy $s$, we denote the sequence
$(p_1(s), \LL, p_n(s))$ by $p(s)$ and call it an \bfe{outcome} of the
game.  We say that $H$ \bfe{has $k$ outcomes} if
$|\{p(s) \mid s \in H_1 \times \cdots \times H_n \}|= k$ and call a
game \bfe{trivial} if it has one outcome. If one of the sets $H_i$ is
empty, we call the game \bfe{empty} and \bfe{non-empty}
otherwise. Unless explicitly stated, all used strategic games are
assumed to be non-empty.  We say that two joint strategies $s$ and $t$
are \bfe{payoff equivalent} if $p(s) = p(t)$.

We call a joint strategy $s$ a \bfe{Nash equilibrium} if 
$
\fa i \in \{1, \ldots, n\}  \fa s'_i \in H_i : p_i(s_i, s_{-i}) \geq p_i(s'_i, s_{-i})
$.
When the number of players and their payoff functions are known we can
identify the game $H$
with the set of strategies in it.

By a \bfe{subgame} of a strategic game $H$ we mean a game obtained
from $H$ by removing some strategies.  Given a set $\cal J$ of
subgames of a strategic game $H$ we define $\bigcap {\cal J}$ as the
subgame of $H$ in which for each player $i$ his set of strategies is
$ \bigcap_{J \in {\cal J}} J_i$.
Also, given two subgames $H'$ and $H''$ of a strategic game $H$ we
write $H' \sse H''$ if for each player $i$, $H'_i \sse H''_i$.

Consider two strategies $s_i$ and $s'_i$ of player $i$ in a strategic
game $H$.  We say that $s_i$ \bfe{weakly dominates $s'_i$} (or
equivalently, that $s'_i$ is \bfe{weakly dominated by $s_i$}) in $H$
if $\fa s_{-i} \in H_{-i} : p_i(s_i, s_{-i}) \geq p_i(s'_i, s_{-i})$
and $\te s_{-i} \in H_{-i} : p_i(s_i, s_{-i}) > p_i(s'_i, s_{-i})$.

In what follows, given a strategic game we consider, possibly
transfinite, sequences of sets of strategies.  They are written as
$(\rho_{\alpha}, \alpha < \gamma)$, where $\alpha$ ranges over all
ordinals smaller than some ordinal $\gamma$.
%
%
Given two such sequences $\rho := (\rho_{\alpha}, \alpha < \gamma)$
and $\rho' := (\rho'_{\alpha'}, \alpha' < \gamma')$, we denote by
$(\rho, \rho')$ their concatenation (which is indexed by
$\gamma + \gamma'$), by $\rho^{\beta}$ the subsequence
$(\rho_{\alpha}, \alpha < \beta)$ of $\rho$,
and for $\alpha < \beta$ by $\rho^{\beta - \alpha}$ the subsequence
such that $(\rho^{\alpha}, \rho^{\beta - \alpha}) = \rho^{\beta}$.
Further, we write $H \to^{\rho} H'$ to denote the fact that the game
$H'$ is the outcome of the iterated elimination from the non-empty
game $H$ of the sets of strategies that form $\rho$.  In each step all
eliminated strategies are weakly dominated in the current game. As a
result $H'$ may be empty.  The relation $\to^{\rho}$ is defined as
follows.

If $\rho = (\rho_0)$, that is, if $\gamma = 1$, then $H \to^{\rho} H'$
holds if each strategy in the set $\rho_0$ is weakly dominated in $H$
and $H'$ is the outcome of removing from $H$ all strategies from
$\rho_0$.
If $\gamma$ is a successor ordinal $>1$, say
$\gamma = \delta + 1$, and $H \to^{\rho'} H'$,
$H' \to^{(\rho_{\delta})} H''$, where $H'$ is non-empty, and
$\rho' := (\rho_{\alpha}, \alpha < \delta)$,
then $H \to^{\rho} H''$.
%
Finally, if $\gamma$ is a limit ordinal and for all
$\beta < \gamma$, $H \to^{\rho^{\beta}} H^{\beta}$,
then $H \to^{\rho} \bigcap_{\beta < \gamma} H^{\beta}$.
In general, the strategic game $H$ from which we eliminate strategies
will be a subgame of a game $\Gamma(G)$, where $G$ is an extensive
game (to be defined shortly).  It will be then convenient to allow in
$\rho$ strategies from $\Gamma(G)$. In the definition of
$H \to^{\rho} H'$ we then disregard the strategies from $\rho$ that
are not from $H$.
In the proofs below we rely on the following observations about the
$\to^{\rho}$ relation, the proofs of which we omit.

\begin{note} \label{not:one}
  \mbox{} \hspace{-4mm}
  \begin{enumerate}[(i)]
  \item Suppose $H \to^{\rho} H'$ and $H' \to^{\rho'} H''$, where $H'$ is non-empty. Then
    $H \to^{(\rho, \rho')} H''$.

    \item Suppose $H \to^{\rho} H'$, where
    $\rho = (\rho_{\alpha}, \alpha < \gamma)$ and $\gamma$ is a limit
    ordinal. Suppose further that for a sequence of ordinals
    $(\alpha_{\delta})_{\delta < \epsilon}$ converging to $\gamma$ we
    have $H \to^{\rho^{\alpha_{\delta}}} H^{\alpha_{\delta}}$
     for all $\delta < \epsilon$.  Then
    $H' = \bigcap_{\delta < \epsilon} H^{\alpha_{\delta}}$. 
  \end{enumerate}
\end{note}

\subsection{Well-founded games}

We recall from \cite{AS21} the definition of a well-founded game.
A \bfe{tree} is an acyclic directed connected graph, written as
$(V,E)$, where $V$ is a non-empty set of nodes and $E$ is a possibly
empty set of edges.
An \bfe{extensive game with perfect information} $(T, \turn, p_1, \LL,
p_n)$ consists of a set of players $\{1, \LL, n\}$, where $n \geq 1$
along with the following. A \bfe{game tree}, which is a tree $T :=
(V,E)$ with a \bfe{turn function} $\turn: V \setminus Z \to \{1, \LL,
n\}$, where $Z$ is the set of leaves of $T$. For each player $i$ a
\bfe{payoff function} $p_i: Z \myra \mathbb{R}$, for each player $i$.
%
%
%
The function $\turn$ determines at each non-leaf node which player
should move. The edges of $T$ represent possible \bfe{moves} in the
considered game, while for a node $v \in V \setminus Z$ the set of its
children $C(v) := \{w \mid (v,w) \in E\}$ represents possible
\bfe{actions} of player $\turn(v)$ at $v$. 

We say that an extensive game with perfect information is
\bfe{finite}, \bfe{infinite}, or \bfe{well-founded} if, respectively,
its game tree is finite, infinite, or
well-founded. Recall that a tree is called \bfe{well-founded} if it
has no infinite paths.  \emph{From now on by an \bfe{extensive game} we mean
a well-founded extensive game with perfect information.}

For a node $u$ in $T$ we denote the subtree of $T$ rooted at $u$ by
$T^u$.  In the proofs we shall often rely on the concept of a
\emph{rank} of a well-founded tree $T$, defined inductively as
follows, where $v$ is the root of $T$:
  \[
\mathit{rank}(T):=
  \begin{cases}
    0 &\text{ if $T$ has one node}\\
    \mathit{sup} \{\mathit{rank}(T^u) + 1 \mid u \in C(v) \} &\text{
      otherwise,}
  \end{cases}
\]
where $sup(X)$ denotes the least ordinal larger than all ordinals in
the set $X$.

For an extensive game $G:= (T, \turn, p_1, \LL, p_n)$ let 
$V_i := \{v \in V \setminus Z \mid \turn(v) = i\}$. So $V_i$ is the
set of nodes at which player $i$ moves.  A \bfe{strategy} for player
$i$ is a function $s_i: V_i \to V$, such that $(v, s_i(v)) \in E$ for
all $v \in V_i$.  We denote the set of strategies of player $i$ by
$S_i$. 
Let $S = S_1 \times \cdots \times S_n$. As in the case of the
strategic games we use the `$-i$' notation, when referring to
sequences of strategies or sets of strategies.

Each joint strategy $s = (s_1, \LL, s_n)$ determines a rooted path
$\mathit{play}(s) := (v_1, \LL, v_m)$ in $T$ defined inductively as
follows. $v_1$ is the root of $T$ and if $v_{k} \not\in Z$, then $v_{k+1} := s_i(v_k)$, where $\turn(v_k) = i$.
So when the game tree consists of just one node, $v$, we have
$\mathit{play}(s) = v$.  Informally, given a joint strategy $s$, we
can view $\mathit{play}(s)$ as the resulting \emph{play} of the game.
For each joint strategy $s$ the rooted path $\mathit{play}(s)$ is
finite since the game tree is assumed to be well-founded.  Denote by
$\leaf(s)$ the last element of $\mathit{play}(s)$.  To simplify the
notation we just write everywhere $p_i(s)$ instead of $p_i(\leaf(s))$.

With each extensive game $G: = (T, \turn, p_1, \LL, p_n)$ we associate
a strategic game $\Gamma(G)$ defined as follows. $\Gamma(G):= (S_1,
\LL, S_n , p_1, \LL, p_n)$, where each $S_i$ is the set of strategies
of player $i$ in $G$.  In the degenerate situation when the game tree
consists of just one node, each strategy is the empty function,
denoted by $\ES$, and there is only one joint strategy, namely the
$n$-tuple $(\ES, \LL, \ES)$ of these functions. In that case we just
stipulate that $p_i(\ES, \LL, \ES) = 0$ for all players $i$.
All notions introduced in the context of strategic games can now be
reused in the context of an extensive game $G$ simply by referring to
the corresponding strategic form $\Gamma(G)$.  In particular, the
notion of a Nash equilibrium is well-defined.

The \bfe{subgame} of an extensive game $G:= (T, \turn, p_1, \LL,
p_n)$, rooted at the node $w$ and denoted by $G^w$, is defined as
follows. The set of players is $\{1, \LL, n\}$, the game tree is $T^w$. The
$\turn$ and payoff functions are the restrictions of the corresponding
functions of $G$ to the nodes of $T^w$.
We call $G^{w}$ a \bfe{direct subgame} of $G$ if $w$ is a child of
the root $v$.

Note that some players may `drop out' in $G^w$, in the sense that at
no node of $T^w$ it is their turn to move.  Still, to keep the
notation simple, it is convenient to admit in $G^w$ all original
players in $G$.

Each strategy $s_i$ of player $i$ in $G$ uniquely determines his
strategy $s^w_i$ in $G^w$.  Given a joint strategy
$s = (s_1, \LL, s_n)$ of $G$ we denote by $s^w$ the joint strategy
$(s^w_1, \LL, s^w_n)$ in $G^w$.  Further, we denote by $S_i^w$ the set
of strategies of player $i$ in the subgame $G^w$ and by $S^w$ the set
of joint strategies in this subgame.

Finally, a joint strategy $s$ of $G$ is called a \bfe{subgame perfect
  equilibrium} in $G$ if for each node $w$ of $T$, the joint strategy
$s^w$ of $G^w$ is a Nash equilibrium in the subgame $G^w$.  We denote
by $\spe(G)$ the set of subgame perfect equilibria in $G$.  Finally,
we say that a game is \bfe{SPE-invariant} if it has a subgame perfect
equilibrium and in each subgame of it all subgame perfect equilibria
are payoff equivalent.

We shall often use the following result.

\begin{theorem}[\cite{AS21}]
  \label{thm:constantSPE}
  Every extensive game with finitely many outcomes has a subgame
    perfect equilibrium.
\end{theorem}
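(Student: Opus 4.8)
The plan is to argue by transfinite induction on $\mathit{rank}(T)$, where $T$ is the game tree of $G$. The first thing I would record is that the hypothesis is inherited by subgames: every outcome of a subgame $G^w$ is also an outcome of $G$ (realised by first moving to $w$ and then playing in $G^w$ as prescribed), so each $G^w$ again has finitely many outcomes. This is exactly what keeps the induction hypothesis available at every node. In the base case $\mathit{rank}(T)=0$ the tree is a single node, the only joint strategy is $(\ES, \LL, \ES)$, and it is vacuously a subgame perfect equilibrium.

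For the inductive step let $v$ be the root and $i = \turn(v)$, and apply the induction hypothesis to each direct subgame $G^u$, $u \in C(v)$ (each of strictly smaller rank), obtaining a subgame perfect equilibrium $t^u$. I would then assemble a joint strategy $s$ of $G$ by letting every player follow $t^u$ throughout the subtree rooted at $u$ — this is unambiguous, since the subtrees rooted at distinct children are disjoint and cover all nodes other than $v$ — and letting player $i$ move at the root to a child $u^*$ chosen so as to maximise $p_i(t^u)$ over $u \in C(v)$.

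The crucial point, and the only place where the finiteness of the outcome set is used, is that this maximum is genuinely attained even when $C(v)$ is infinite: the set $\{p_i(t^u) \mid u \in C(v)\}$ is contained in the finite set of values taken by $p_i$, hence is itself finite, so a maximiser $u^*$ exists. Without this, as Example~\ref{ex:nospe} shows, no optimal action need exist and the game may have no subgame perfect equilibrium at all. It then remains to verify that $s$ is a subgame perfect equilibrium. For a node $w \neq v$ this is immediate: $s$ coincides with some $t^u$ on the subtree containing $w$, so $s^w = (t^u)^w$, which is a Nash equilibrium of $G^w$ because $t^u$ is a subgame perfect equilibrium of $G^u$. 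For $w = v$ I would check the Nash condition at the root directly: a deviation by a player $j \neq i$ still sends the play into $u^*$ and reduces to a deviation within the Nash equilibrium $t^{u^*}$, hence cannot help; a deviation by player $i$ sending the play into some child $u$ yields at most $p_i(t^u)$, since $t^u$ is a Nash equilibrium of $G^u$, and $p_i(t^u) \leq p_i(t^{u^*}) = p_i(s)$ by the choice of $u^*$.

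The main obstacle is precisely the infinite-branching case just isolated: in the finite setting backward induction picks a best child automatically, whereas here one must argue that a best child exists despite possibly infinitely many options, which is what the finitely-many-outcomes assumption buys. Everything else is the routine bookkeeping of matching strategies of $G$ with strategies of its direct subgames.
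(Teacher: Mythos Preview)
The paper does not prove this theorem here; it is quoted from \cite{AS21} without proof. Your argument is correct and is exactly the expected one: transfinite induction on the rank, choosing a subgame perfect equilibrium in each direct subgame and letting the root player pick a child maximising his induced payoff, with the finiteness of the outcome set used precisely to guarantee that this maximum is attained even when the root has infinitely many children.
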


 \section{Preliminary lemmas}
\label{sec:first}

In this section we present a sequence of lemmas needed to prove our
first main result. In the proofs we often switch between a game and
its direct subgames.

Consider an extensive game $G: = (T, \turn,
p_1, \LL, p_n)$ with the root $v$ and a child $w$ of $v$.
For each player $j$ to each of his strategy $t_j$ in a direct subgame
$G^w$ there corresponds a natural set $[t_j]$ of his strategies in the
game $G$ defined by
$
  [t_j] := \{s_j \mid t_j = s^w_j \text{\ and\ } s_j(v) = w \text{\ if\ } j = \turn(v)\}$.
So for a player $j$, $[t_j]$ is the set of his strategies in $G$
the restriction of which to $G^w$ is $t_j$, with the
additional proviso that if $j = \turn(v)$, then each strategy in
$[t_j]$ selects $w$ at the root $v$. We call $[t_j]$ the
\emph{lifting} of $t_j$ to the game $G$.
The following lemma clarifies the relevance of lifting.

\begin{restatable}{lemma}{lemLifting}
    \label{lem:lifting}
  Consider a direct subgame $G^w$ of $G$. Suppose that the
  strategy $t_j$ is weakly dominated in $G^w$. Then
  each strategy in $[t_j]$ is weakly dominated in $G$.
\end{restatable}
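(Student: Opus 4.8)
The plan is to exhibit, for an arbitrary strategy $s_j \in [t_j]$, an explicit strategy $s'_j$ of player $j$ in $G$ that weakly dominates it. Since $t_j$ is weakly dominated in $G^w$, I first fix a strategy $t'_j$ in $G^w$ witnessing this, so that $p_j(t'_j, t_{-j}) \geq p_j(t_j, t_{-j})$ for all $t_{-j} \in S^w_{-j}$, with strict inequality for some $t_{-j}$. I then define $s'_j$ to coincide with $s_j$ on every node of $V_j$ lying outside the subtree $T^w$, and to coincide with $t'_j$ on every node of $V_j$ inside $T^w$. This is a legal strategy in $G$, and since $v \notin T^w$, if $j = \turn(v)$ then $s'_j(v) = s_j(v) = w$; hence in fact $s'_j \in [t'_j]$.

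Next I would establish the weak inequality $p_j(s'_j, s_{-j}) \geq p_j(s_j, s_{-j})$ for every $s_{-j}$ by examining where the play goes after the root. Because $T$ is a tree rooted at $v$, every node other than $v$ lies in exactly one direct subtree $T^{w'}$ with $w' \in C(v)$. If the first move leads to some $w' \neq w$, the play stays inside $T^{w'}$, which is disjoint from $T^w$; there $s'_j$ agrees with $s_j$, so the two plays coincide and the payoffs are equal. If instead the first move leads to $w$ (which is forced when $\turn(v) = j$, since then $s_j(v) = w$), the remainder of the play lives inside $T^w$ and is governed by the restrictions $s^w_{-j}$ together with $t_j$, respectively $t'_j$; hence $p_j(s_j, s_{-j}) = p_j(t_j, s^w_{-j})$ and $p_j(s'_j, s_{-j}) = p_j(t'_j, s^w_{-j})$, and the weak domination of $t_j$ by $t'_j$ in $G^w$ gives the desired inequality. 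Either way the inequality holds.

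Finally, for the strict part I would force the play into $T^w$ in order to import the $G^w$-witness. Taking the $t_{-j}$ for which $p_j(t'_j, t_{-j}) > p_j(t_j, t_{-j})$, I build $s_{-j}$ in $G$ by using, for each $k \neq j$, the component $t_k$ on the nodes of player $k$ inside $T^w$, setting the root move to $w$ when $\turn(v) = k \neq j$, and choosing the remaining values arbitrarily. Then both $\play(s_j, s_{-j})$ and $\play(s'_j, s_{-j})$ pass from $v$ directly to $w$ and thereafter replicate $\play(t_j, t_{-j})$ and $\play(t'_j, t_{-j})$ inside $T^w$, so that $p_j(s'_j, s_{-j}) = p_j(t'_j, t_{-j}) > p_j(t_j, t_{-j}) = p_j(s_j, s_{-j})$.

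I expect the only delicate point to be the bookkeeping that lets me pass cleanly between plays in $G$ and plays in $G^w$: one must check that once a play reaches $w$ it is completely determined by the strategies restricted to $T^w$, so that the $G$-payoff along such a play equals the corresponding $G^w$-payoff. Everything else is a routine case split on the first move, and the constructions of $s'_j$ and of the witnessing $s_{-j}$ are direct.
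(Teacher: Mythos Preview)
Your proof is correct and follows essentially the same approach as the paper: you construct the dominating strategy by keeping $s_j$ unchanged outside $T^w$ and replacing its restriction to $T^w$ by the dominating strategy $t'_j$ (the paper's $u_j$), then argue by tracking whether the play enters $T^w$. The only cosmetic difference is that the paper splits into the cases $j = \turn(v)$ versus $j \neq \turn(v)$, whereas you split on whether the first move lands at $w$ or at some other child; the two organizations cover the same situations.
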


\begin{proof}
  Suppose that $t_j$ is weakly dominated in $G^w$ by some strategy
  $u_j$. Take a strategy $v_j$ in $[t_j]$.  We show that $v_j$ is
  weakly dominated in $G$ by the strategy $w_j$ in $[u_j]$ that
  coincides with $v_j$ on all the nodes that do not belong to
  $G^w$. So $w_j$ is obtained from $v_j$ by replacing in it $v^w_j$,
  i.e., $t_j$, by $u_j$. Below $s_{-j}$ denotes a sequence of
  strategies in $G$ of the opponents of player $j$.

  \medskip
  
  \NI  \emph{Case 1.} $j = \turn(v)$.
  
By the choice of $u_j$ for all $s_{-j}$ $ p_j(t_j, s^w_{-j}) \leq
p_j(u_j, s^w_{-j})$ and for some $s_{-j}$ $p_j(t_j, s^w_{-j}) <
p_j(u_j, s^w_{-j})$.  Further, by the definition of $[\cdot]$ we have
$v_j(v) = w$, so for all $s_{-j}$ we have $p_j(v_j, s_{-j})= p_j(t_j,
s^w_{-j})$ and $p_j(u_j, s^w_{-j}) = p_j(w_j, s_{-j})$, so the claim
follows.

\medskip

\NI  \emph{Case 2.} $j \neq \turn(v)$.

    Let $i = \turn(v)$. Take some $s_{-j}$.  If $s_i(v) = w$, then
    $p_j(v_j, s_{-j})= p_j(t_j, s^w_{-j})$ and $p_j(w_j, s_{-j})=
    p_j(u_j, s^w_{-j})$. Thus $p_j(v_j, s_{-j}) \leq p_j(w_j, s_{-j})$
    by the choice of $u_j$ and $w_j$.  Further, if $s_i(v) \neq w$,
    then $p_j(v_j, s_{-j})= p_j(w_j, s_{-j})$ by the choice of $w_j$.

  Choose an arbitrary $s_{-j}$ such that $s_i(v) = w$ and
  $p_j(t_j, s^w_{-j}) < p_j(u_j, s^w_{-j})$.  By the choice of $s_i$
  we have $p_j(v_j, s_{-j})= p_j(t_j, s^w_{-j})$ and $p_j(w_j,
  s_{-j})= p_j(u_j, s^w_{-j})$, so $p_j(v_j, s_{-j}) < p_j(w_j,
  s_{-j})$. Thus the claim follows.
\end{proof}

We now extend the notation $[\cdot]$ to sets of strategies and
sequences of sets strategies. First, given a set of strategies $A$ in
a direct subgame $G^w$ of $G$ we define
$[A] :=  \bigcup_{s_j \in A} [s_j]$.
Next, given a sequence $\rho$ of sets of strategies of players, each
set taken from a direct subgame of $G$, we denote by $[\rho]$ the
corresponding sequence of sets of strategies of players in $G$
obtained by replacing each element $A$ in $\rho$ by $[A]$.

Given a set $A$ of strategies of players
in a direct subgame $G^w$ we define the corresponding set of 
strategies in the game $G$ by putting
$\langle A \rangle = \{s_j \mid s^w_j \in A\}$.
Thus for a set $A$ of strategies in a direct subgame $G^w$, the set
$\langle A \rangle$ differs from $[A]$ in that we do include in the
former set strategies $s_j$ for which $s_j(v) \neq w$.  Given a set
$A$ of strategies of player $j$ in the subgame $G^w$, we call
$\langle A \rangle$ an \emph{extension} of $A$ to the game $G$.
Further, given a subgame $H$ of $\Gamma(G^w)$, we define
$\langle H \rangle$ as the subgame of $\Gamma(G)$ in which for each
player $j$ we have $\langle H \rangle_j = \langle H_j \rangle$.

In what follows we need a substantially strengthened version of
Lemma \ref{lem:lifting} that relies on the following concept.
Given an extensive game $G$ with a root $v$, we say that a
non-empty subgame $J$ of $\Gamma(G)$ \bfe{does not depend on} a direct subgame
$G^w$ if for any strategy $s_j$ from $J$ any modification of it
  on the non-leaf nodes of $G^w$ or on $v$ if $\turn(v) = j$ is also
  in $J$.
  Note that in particular $\Gamma(G)$ does not depend on any of its
  direct subgame and that for any non-empty subgame $H$ of a
  direct subgame $G^w$ of $G$ the subgame $\langle H \rangle$ does not
  depend on any other direct subgame of $G$.

  \begin{restatable}{lemma}{lemLiftingThree}
    \label{lem:lifting3}
  Consider a direct subgame $G^w$ of $G$, subgames $H$ and $H'$ of
  $\Gamma(G^w)$ and a set $A$ of strategies in $H$.  Suppose that $H
  \to^{A} H'$ and that the subgame $J$ of $\Gamma(G)$
  does not depend on $G^w$.  Then $J \cap \langle H \rangle \to^{[A]}
  J \cap \langle H' \rangle$.
  \end{restatable}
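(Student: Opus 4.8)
The plan is to unfold the one-step relation $\to^{[A]}$ into its two defining requirements and verify each. Since $H \to^{A} H'$ is an instance with $\gamma = 1$, establishing $J \cap \langle H \rangle \to^{[A]} J \cap \langle H' \rangle$ amounts to showing (1) every strategy of $[A]$ that actually occurs in $J \cap \langle H \rangle$ is weakly dominated there, and (2) deleting $[A]$ from $J \cap \langle H \rangle$ leaves exactly $J \cap \langle H' \rangle$. Throughout I write $i := \turn(v)$ and, for a player $j$, let $A_j$ denote the strategies of $j$ in $A$, so that $H'_j = H_j \setminus A_j$; note also that $[A] \subseteq \langle H \rangle$ since each $s_j \in [A]$ satisfies $s^w_j \in A_j \subseteq H_j$.

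For requirement (1) I would reuse and strengthen the construction of Lemma \ref{lem:lifting}. Fix $s_j \in [A] \cap J$; then $s^w_j \in A_j$, so $s^w_j$ is weakly dominated in $H$ by some $u_j \in H_j$. As in Lemma \ref{lem:lifting}, let $w_j$ be the strategy agreeing with $s_j$ off the non-leaf nodes of $G^w$ but playing $u_j$ inside $G^w$ (and, when $j = i$, keeping $w_j(v) = w = s_j(v)$, which holds because $s_j \in [A]$). The ingredient absent from Lemma \ref{lem:lifting} is that the candidate dominator must remain in $J \cap \langle H \rangle$: since $w^w_j = u_j \in H_j$ we get $w_j \in \langle H \rangle$, and since $w_j$ is obtained from $s_j \in J$ by a modification on the non-leaf nodes of $G^w$ (and on $v$ only if $j = i$), the hypothesis that $J$ does not depend on $G^w$ gives $w_j \in J$. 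The non-strict inequalities $p_j(s_j, s_{-j}) \le p_j(w_j, s_{-j})$ for all $s_{-j} \in (J \cap \langle H \rangle)_{-j}$ then follow from the two cases of Lemma \ref{lem:lifting}, after observing that $s^w_{-j} \in H_{-j}$ whenever $s_{-j} \in \langle H \rangle_{-j}$. The point needing genuine care is producing a witness in $J \cap \langle H \rangle$ at which the inequality is strict: starting from a witness $t_{-j} \in H_{-j}$ for the domination of $s^w_j$ by $u_j$, I would invoke the independence of $J$ to modify an arbitrary profile of $J \cap \langle H \rangle$ so that its $G^w$-restriction equals $t_{-j}$ and, when $j \neq i$, so that the player moving at $v$ additionally selects $w$ (legitimate since $\turn(v)=i$); independence keeps the modified profile in $J$ and its restriction keeps it in $\langle H \rangle$, while forcing the play into $G^w$ yields the strict inequality.

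For requirement (2) I would argue coordinate by coordinate, using the disjoint decomposition $\langle H_j \rangle = \langle H'_j \rangle \cup \langle A_j \rangle$ coming from $H_j = H'_j \cup A_j$. For a player $j \neq i$ we have $[A_j] = \langle A_j \rangle$, so $(J_j \cap \langle H_j \rangle) \setminus [A_j] = J_j \cap \langle H'_j \rangle$ at once. The delicate coordinate is $j = i$, and I expect it to be the \emph{main obstacle}: here $[A_i] \subsetneq \langle A_i \rangle$, since $[A_i]$ retains only the strategies with $s_i(v) = w$, whereas passing from $\langle H_i \rangle$ to $\langle H'_i \rangle$ discards every strategy whose $G^w$-restriction lies in $A_i$. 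Thus the argument must pin down, within $J$, the fate of the strategies $s_i$ with $s_i(v) \neq w$ and $s^w_i \in A_i$, reconciling the structural definition of the lifting $[\cdot]$ with the restriction-based definition of the extension $\langle \cdot \rangle$; this is exactly the place where the independence of $J$ from $G^w$ must be combined with the precise shapes of $[A_i]$ and $\langle A_i \rangle$. Once the coordinate-wise identity is settled for $i$ as well, assembling the equalities over all players gives $(J \cap \langle H \rangle) \setminus [A] = J \cap \langle H' \rangle$, which together with (1) is precisely the assertion $J \cap \langle H \rangle \to^{[A]} J \cap \langle H' \rangle$.
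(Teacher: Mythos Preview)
Your treatment of requirement (1)---weak domination of each strategy of $[A]\cap J$ inside $J\cap\langle H\rangle$---is exactly the paper's argument: replace the $G^w$-part of $v_j$ by the dominator $u_j$ to obtain $w_j$, use that $J$ does not depend on $G^w$ to keep $w_j\in J$, and split into the cases $j=\turn(v)$ and $j\neq\turn(v)$ as in Lemma~\ref{lem:lifting}. The paper's proof does only this; it never addresses your requirement (2) at all.

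Your instinct to scrutinise (2) is right, but the obstacle you isolate for $j=i:=\turn(v)$ is not merely ``delicate''---it cannot be settled, and the independence hypothesis does not help. Any $s_i\in J_i$ with $s_i(v)\neq w$ and $s^w_i\in A_i$ lies in $(J_i\cap\langle H_i\rangle)\setminus[A_i]$ but not in $J_i\cap\langle H'_i\rangle$; and because $J$ is \emph{closed} under modifications at $v$ and on the non-leaf nodes of $G^w$, such strategies exist whenever $A_i\neq\emptyset$ and $v$ has another child (start from any $t_i\in J_i$ and modify it so that $t_i(v)\neq w$ and $t^w_i\in A_i$). Hence the coordinate-wise identity you hope to establish is false in general, and with it the literal equality $(J\cap\langle H\rangle)\setminus[A]=J\cap\langle H'\rangle$. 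This is a gap shared by the paper's own proof. The relevant observation for a repair is that each such surplus $s_i$ is payoff-identical, against every opponent profile, to the strategy obtained by replacing $s^w_i$ with any fixed element of $H'_i$ (the play never enters $G^w$ when $s_i(v)\neq w$), so the discrepancy is immaterial for the uses in Lemmas~\ref{lem:subgame2} and~\ref{lem:two}; but the one-step relation $\to^{[A]}$ as defined does not land exactly on $J\cap\langle H'\rangle$, and your proposal should not promise that it will.
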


\begin{proof}
Take a strategy $v_j$ in $[A]$. For some strategy $t_j$ from $A$ that
is weakly dominated in $H$ by some strategy $u_j$ we have $v_j \in
[t_j] \cap J_j$.  Select a strategy $w_j$ in $[u_j]$ that coincides
with $v_j$ on the nodes that do not belong to $G^w$.  So $w_j$ is a
modification of $v_j$ on the non-leaf nodes of $G^w$ and consequently,
by the assumption about $J$, it is in $J_j$. Further, $w_j$ is in
$\langle H \rangle$, since $u_j$ is from $H$.

  We claim that $v_j$ is weakly dominated in $J \cap \langle H \rangle$ by $w_j$.  Below $s_{-j}$ denotes a sequence of strategies of the opponents of player $j$ in the original game $G$. 

  \medskip
  
  \NI\emph{Case 1.} $j = \turn(v)$.

By the choice of $u_j$ for all $s_{-j}$ such that $s^w_{-j} \in H_{-j}$
$p_j(t_j, s^w_{-j}) \leq p_j(u_j, s^w_{-j})$
and for some $s_{-j}$ such that $s^w_{-j} \in H_{-j}$
$p_j(t_j, s^w_{-j}) < p_j(u_j, s^w_{-j})$.
By the definition of `does not depend on' and the fact that $j = \turn(v)$
we can also assume that the latter 
$s_{-j}$ is from $J_{-j}$ by stipulating that $s_{-j} = t_{-j}$ for an arbitrary joint strategy $t$ from $J$.

Further, by the definition of $[\cdot]$ we have $v_j(v) = w$, so
for all $s_{-j}$ such that
$s^w_{-j} \in H_{-j}$ we have $p_j(v_j, s_{-j})= p_j(t_j, s^w_{-j})$
and $p_j(u_j, s^w_{-j}) = p_j(w_j, s_{-j})$.
Hence for all $s_{-j}$ 
$p_j(v_j, s_{-j}) \leq p_j(w_j, s_{-j})$
and for some $s_{-j}$ such that $s_{-j} \in J_{-j}$ and $s^w_{-j} \in H_{-j}$ (i.e., for some  $s_{-j} \in (J \cap \langle H \rangle)_{-j}$)
$p_j(v_j, s_{-j}) < p_j(w_j, s_{-j})$.
This establishes the claim.

\medskip

\NI \emph{Case 2.} $j \neq \turn(v)$.

Let $i = \turn(v)$. Take some $s_{-j}$.  If $s_i(v) = w$, then
$p_j(v_j, s_{-j})= p_j(t_j, s^w_{-j})$ and $p_j(w_j, s_{-j})= p_j(u_j,
s^w_{-j})$. Thus $p_j(v_j, s_{-j}) \leq p_j(w_j, s_{-j})$ by the choice
of $u_j$ and $w_j$.  Further, if $s_i(v) \neq w$, then $p_j(v_j,
s_{-j})= p_j(w_j, s_{-j})$ by the choice of $w_j$.  So for all
$s_{-j}$ we have $p_j(v_j, s_{-j}) \leq p_j(w_j, s_{-j})$.

Choose an arbitrary $s_{-j}$ such that $s_i(v) = w$, $s^w_{-j} \in
H_{-j}$, and $p_j(t_j, s^w_{-j}) < p_j(u_j, s^w_{-j})$. Additionally,
we can claim that $s_{-j} \in J_{-j}$ by stipulating that $s_{-j} =
t_{-j}$ for an arbitrary joint strategy $t$ from $J$.  Then $s_{-j}
\in (J \cap \langle H \rangle)_{-j}$.

By the choice of $s_i$ we have $p_j(v_j, s_{-j})= p_j(t_j, s^w_{-j})$
and $p_j(w_j, s_{-j})= p_j(u_j, s^w_{-j})$, so $p_j(v_j, s_{-j}) <
p_j(w_j, s_{-j})$.  This establishes the claim for this case.
\end{proof}

We continue with some lemmas concerned with the relation $\to^{\rho}$.

\begin{lemma} \label{lem:subgame2}
Consider a direct subgame $G^w$ of $G$.  Suppose that for some
sequence $\rho$ of sets of strategies of players in $G^w$ and a
subgame $H$ of $\Gamma(G^{w})$, $\Gamma(G^{w}) \to^{\rho} H$.
Suppose further that the subgame $J$ of $\Gamma(G)$ does not depend on $G^w$.
Then $ J \to^{[\rho]} J \cap \langle H \rangle$.  
\end{lemma}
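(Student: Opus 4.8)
The plan is to proceed by transfinite induction on the length $\gamma$ of the sequence $\rho = (\rho_\alpha, \alpha < \gamma)$, mirroring the inductive definition of the relation $\to^{\rho}$ itself. The key engine throughout is Lemma \ref{lem:lifting3}, which handles exactly the one-step case and was designed with this induction in mind. A recurring bookkeeping point I must respect is that $\langle \cdot \rangle$ commutes well with intersections and that whenever $H'$ is a subgame of $\Gamma(G^w)$ reached from $\Gamma(G^w)$, the subgame $J \cap \langle H' \rangle$ still does not depend on $G^w$ --- this follows from the Note after the definition of ``does not depend on,'' since $\langle H' \rangle$ does not depend on any direct subgame and intersecting with $J$ (which does not depend on $G^w$) preserves this property. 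I would state this preservation observation explicitly at the outset, because every inductive step needs to re-invoke Lemma \ref{lem:lifting3} with the updated non-depending subgame $J \cap \langle H^\beta \rangle$ in place of $J$.

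For the base case, $\gamma = 1$, so $\rho = (A)$ for a single set $A$ with $\Gamma(G^w) \to^A H$. Here I apply Lemma \ref{lem:lifting3} directly with the given $J$, taking $H := \Gamma(G^w)$ in the lemma's notation, to obtain $J \cap \langle \Gamma(G^w) \rangle \to^{[A]} J \cap \langle H \rangle$; since $\langle \Gamma(G^w) \rangle = \Gamma(G)$ and $J \sse \Gamma(G)$, the left side is just $J$, giving $J \to^{[A]} J \cap \langle H \rangle$. For the successor case $\gamma = \delta + 1$, write $\rho = (\rho', (\rho_\delta))$ where $\Gamma(G^w) \to^{\rho'} H^\delta$ and $H^\delta \to^{(\rho_\delta)} H$ with $H^\delta$ non-empty. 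The induction hypothesis yields $J \to^{[\rho']} J \cap \langle H^\delta \rangle$. Since $J \cap \langle H^\delta \rangle$ does not depend on $G^w$ by the preservation observation, I apply Lemma \ref{lem:lifting3} with this subgame in the role of $J$ and the single-step elimination $H^\delta \to^{\rho_\delta} H$ to get $(J \cap \langle H^\delta \rangle) \cap \langle H^\delta \rangle \to^{[\rho_\delta]} (J \cap \langle H^\delta \rangle) \cap \langle H \rangle$; using $\langle H \rangle \sse \langle H^\delta \rangle$ this simplifies to $J \cap \langle H^\delta \rangle \to^{[\rho_\delta]} J \cap \langle H \rangle$. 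Concatenating via Note \ref{not:one}(i) gives $J \to^{[\rho]} J \cap \langle H \rangle$, after checking $[\rho] = ([\rho'], [(\rho_\delta)])$, which is immediate from the definition of lifting a sequence.

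The limit case is where I expect the main obstacle, since I must commute the operator $J \cap \langle \cdot \rangle$ with the transfinite intersection defining the outcome at a limit ordinal. Here $\gamma$ is a limit, and for each $\beta < \gamma$ the induction hypothesis gives $J \to^{[\rho^\beta]} J \cap \langle H^\beta \rangle$ where $\Gamma(G^w) \to^{\rho^\beta} H^\beta$, so by definition $H = \bigcap_{\beta < \gamma} H^\beta$. By the limit clause of the definition of $\to^{[\rho]}$, the outcome $J \to^{[\rho]} \bigcap_{\beta < \gamma}(J \cap \langle H^\beta \rangle)$, and I must verify this equals $J \cap \langle H \rangle = J \cap \langle \bigcap_{\beta < \gamma} H^\beta \rangle$. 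The crux is the set-theoretic identity $\langle \bigcap_{\beta} H^\beta \rangle = \bigcap_\beta \langle H^\beta \rangle$, which I would prove at the level of each player's strategy set: a strategy $s_j$ of $G$ lies in $\langle \bigcap_\beta H^\beta \rangle_j$ exactly when $s^w_j \in \bigcap_\beta (H^\beta)_j$, i.e. when $s^w_j \in (H^\beta)_j$ for every $\beta$, which is precisely membership in $\bigcap_\beta \langle H^\beta \rangle_j$. The extension operator $\langle \cdot \rangle$ imposes a condition purely on the restriction $s^w_j$ and places no constraint off $G^w$, so it distributes over arbitrary intersections without the subtlety that $[\cdot]$ would carry. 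Intersecting with the fixed $J$ then commutes trivially, and pulling $J$ inside the big intersection (using that intersection is idempotent and associative) completes the identification, closing the induction.
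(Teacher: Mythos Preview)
Your overall plan---transfinite induction on the length of $\rho$, driven by Lemma \ref{lem:lifting3}---matches the paper's proof, and your base and limit cases are correct (the verification that $\langle \bigcap_\beta H^\beta \rangle = \bigcap_\beta \langle H^\beta \rangle$ is a useful detail the paper leaves implicit).

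There is, however, a genuine error in your successor step. Your ``preservation observation'' claims that $J \cap \langle H^\delta \rangle$ does not depend on $G^w$, citing the Note after the definition. But that Note says only that $\langle H \rangle$ does not depend on any \emph{other} direct subgame of $G$; it certainly \emph{does} depend on $G^w$ itself. Concretely, if $(H^\delta)_j$ is a proper subset of $S^w_j$, take any $s_j \in \langle H^\delta \rangle_j$ and modify it on the non-leaf nodes of $G^w$ so that its restriction becomes some $t_j \notin (H^\delta)_j$; the modified strategy is no longer in $\langle H^\delta \rangle_j$. So your proposed application of Lemma \ref{lem:lifting3} with $J \cap \langle H^\delta \rangle$ in the role of $J$ is not justified.

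The fix is simpler than what you wrote: apply Lemma \ref{lem:lifting3} with the \emph{original} $J$ (which by hypothesis does not depend on $G^w$), taking the lemma's $H$ to be $H^\delta$ and the lemma's $H'$ to be $H$. The conclusion is then directly $J \cap \langle H^\delta \rangle \to^{[\rho_\delta]} J \cap \langle H \rangle$, with no need for any preservation claim. This is exactly how the paper handles the successor case, and it is why Lemma \ref{lem:lifting3} was formulated with an arbitrary intermediate subgame $H$ of $\Gamma(G^w)$ rather than only $\Gamma(G^w)$ itself.
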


\begin{proof}
We proceed by transfinite induction on the length $\gamma$ of $\rho =
(\rho_{\alpha}, \alpha < \gamma)$.

\medskip

\NI
\emph{Case 1.} $\gamma = 1$.

By Lemma \ref{lem:lifting3}
$J \cap \langle \Gamma(G^w) \rangle \to^{[\rho_0]} J \cap \langle H \rangle$,
so the claim holds since $\langle \Gamma(G^w) \rangle  = \Gamma(G)$ and $J \cap \Gamma(G) = J$.

\medskip

\NI
\emph{Case 2.} $\gamma$ is a successor ordinal $>1$.

Suppose $\gamma = \delta + 1$. Then $\rho = (\rho', \rho_{\delta})$,
where $\rho' := (\rho_{\alpha}, \alpha < \delta)$.  By definition for
some $H'$ we have $ \Gamma(G^{w}) \to^{\rho'} H'$ and
$H' \to^{\rho_{\delta}} H$.
By the induction hypothesis $J  \to^{[\rho']} J \cap \langle H' \rangle$ and by 
Lemma \ref{lem:lifting3} $J \cap \langle H' \rangle \to^{[\rho_{\delta}]} J \cap \langle H \rangle$, so the
claim follows by Note \ref{not:one}$(i)$, since
$[\rho] = ([\rho'], [\rho_{\delta}])$.

\medskip

\NI
\emph{Case 3.} $\gamma$ is a limit ordinal.

By definition for some games $H^{\beta}$, where $\beta < \gamma$, we have
$\Gamma(G^w) \to^{\rho^\beta} H^{\beta}$ and $H =  \bigcap_{\beta < \gamma} H^{\beta}$,
where---recall---$\rho^{\beta} = (\rho_{\alpha}, \alpha < \beta)$.
By the induction hypothesis for all $\beta < \gamma$, we have
$ J \to^{[\rho^{\beta}]} J \cap \langle H^{\beta} \rangle$. 
So by definition
$J \to^{[\rho]} J \cap \langle H \rangle$,
since
$J \cap \langle H \rangle =    \bigcap_{\beta < \gamma} \langle J \cap H^{\beta} \rangle$ as
$\langle H \rangle =    \bigcap_{\beta < \gamma} \langle H^{\beta} \rangle$. 
\end{proof}

\begin{restatable}{lemma}{lemTwo}
  \label{lem:two}
Consider an extensive game $G$ with the root $v$. Suppose that
$(w_{\alpha}, \alpha < \gamma)$ is a sequence of children of $v$ and
that for all $\alpha < \gamma$, $\rho_{\alpha}$ is a sequence of sets
of strategies in the direct subgame $G^{w_{\alpha}}$. Suppose further
that for each $\alpha < \gamma$
$\Gamma(G^{w_{\alpha}}) \to^{\rho_{\alpha}} H^{w_{\alpha}}$,
  where each game $H^{w_{\alpha}}$ is non-empty.
  Let $\rho$ be the concatenation of the sequences
  $(\rho_{\alpha}, \alpha < \gamma)$.  Then
  $\Gamma(G) \to^{[\rho]} \bigcap_{\alpha < \gamma} \langle H^{w_{\alpha}} \rangle$.
\end{restatable}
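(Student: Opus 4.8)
The plan is to prove Lemma~\ref{lem:two} by transfinite induction on $\gamma$, the length of the sequence $(w_\alpha, \alpha < \gamma)$ of children of $v$, using Lemma~\ref{lem:subgame2} as the workhorse for each individual subgame and the ``does not depend on'' machinery to keep the eliminations in distinct subgames independent. The key structural observation is that eliminating strategies in one direct subgame $G^{w_\alpha}$ does not interfere with strategies in another direct subgame $G^{w_\beta}$, because a strategy's behaviour on $G^{w_\alpha}$ and on $G^{w_\beta}$ (with $\alpha \neq \beta$) are chosen independently, and also the root choice $\turn(v)$ is unaffected by elimination within proper subgames.

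First I would treat the base case $\gamma = 1$ (a single child $w_0$): here $\rho = \rho_0$ is a sequence in $G^{w_0}$ with $\Gamma(G^{w_0}) \to^{\rho_0} H^{w_0}$, and I apply Lemma~\ref{lem:subgame2} with $J := \Gamma(G)$, which by the note following Lemma~\ref{lem:lifting3} does not depend on any direct subgame; this yields $\Gamma(G) \to^{[\rho_0]} \Gamma(G) \cap \langle H^{w_0}\rangle = \langle H^{w_0}\rangle$, as desired. For the successor case $\gamma = \delta + 1$, I would split $\rho$ as the concatenation of $\rho^{<\delta}$ (handling children $w_\alpha$, $\alpha < \delta$) and $\rho_\delta$ (handling $w_\delta$). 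By the induction hypothesis $\Gamma(G) \to^{[\rho^{<\delta}]} J'$ where $J' := \bigcap_{\alpha < \delta}\langle H^{w_\alpha}\rangle$. The crucial point is that $J'$ \emph{does not depend on} $G^{w_\delta}$: each $\langle H^{w_\alpha}\rangle$ for $\alpha < \delta$ is an extension from a subgame other than $G^{w_\delta}$, and the intersection of games not depending on $G^{w_\delta}$ still does not depend on it, so I can invoke Lemma~\ref{lem:subgame2} again with this $J'$ to get $J' \to^{[\rho_\delta]} J' \cap \langle H^{w_\delta}\rangle = \bigcap_{\alpha < \gamma}\langle H^{w_\alpha}\rangle$. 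Concatenating via Note~\ref{not:one}$(i)$ and noting $[\rho] = ([\rho^{<\delta}], [\rho_\delta])$ closes this case.

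For the limit case, I would use Note~\ref{not:one}$(ii)$: for each $\beta < \gamma$ the induction hypothesis applied to the initial segment gives $\Gamma(G) \to^{[\rho^\beta]} \bigcap_{\alpha<\beta}\langle H^{w_\alpha}\rangle$, and since these prefixes converge to $\gamma$, the outcome of $\to^{[\rho]}$ is the intersection $\bigcap_{\beta<\gamma}\bigcap_{\alpha<\beta}\langle H^{w_\alpha}\rangle = \bigcap_{\alpha<\gamma}\langle H^{w_\alpha}\rangle$, matching the claim. I would also need to confirm non-emptiness of the intermediate games so that the concatenation in Note~\ref{not:one}$(i)$ and the transfinite definition of $\to^\rho$ apply; this follows because each $H^{w_\alpha}$ is assumed non-empty and the intersection over the $\langle\cdot\rangle$ extensions remains non-empty (any joint strategy realising all the surviving subgame strategies simultaneously exists, as the subgames are on disjoint parts of the tree).

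The main obstacle I anticipate is verifying rigorously, in the successor and limit cases, that the running game $J' = \bigcap_{\alpha}\langle H^{w_\alpha}\rangle$ genuinely does not depend on the next subgame $G^{w_\delta}$ and remains non-empty throughout, so that Lemma~\ref{lem:subgame2} is applicable at each stage. This requires checking that ``does not depend on $G^{w_\delta}$'' is preserved under the intersection $\bigcap_{\alpha<\delta}\langle H^{w_\alpha}\rangle$ --- a modification of a strategy on the non-leaf nodes of $G^{w_\delta}$ (or at $v$, if $\turn(v)=j$) leaves its restriction to each $G^{w_\alpha}$ with $\alpha \neq \delta$ unchanged and does not move the root choice relative to membership in $\langle H^{w_\alpha}\rangle$ --- together with confirming that the identity $\langle H \rangle = \bigcap_\beta \langle H^\beta\rangle$ used in the limit step of Lemma~\ref{lem:subgame2} propagates correctly across different children. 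Once independence and non-emptiness are established at each ordinal stage, the induction proceeds mechanically.
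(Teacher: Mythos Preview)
Your proposal is correct and follows essentially the same route as the paper's proof: transfinite induction on $\gamma$, with the base case handled by Lemma~\ref{lem:subgame2} applied to $J=\Gamma(G)$, the successor case by the induction hypothesis followed by another application of Lemma~\ref{lem:subgame2} (using that $\bigcap_{\alpha<\delta}\langle H^{w_\alpha}\rangle$ does not depend on $G^{w_\delta}$) and Note~\ref{not:one}$(i)$, and the limit case via Note~\ref{not:one}$(ii)$ and the identity $\bigcap_{\beta<\gamma}\bigcap_{\alpha<\beta}\langle H^{w_\alpha}\rangle = \bigcap_{\alpha<\gamma}\langle H^{w_\alpha}\rangle$. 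Your explicit attention to non-emptiness of the intermediate games and to the closure of ``does not depend on $G^{w_\delta}$'' under intersection is more careful than the paper, which simply asserts these points.
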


By assumption each $H^{w_{\alpha}}$ is a non-empty subgame of
$\Gamma(G^{w_{\alpha}})$, so each $\langle H^{w_{\alpha}} \rangle$ is
a non-empty subgame of $\Gamma(G)$, and consequently
$\bigcap_{\alpha < \gamma} \langle H^{w_{\alpha}} \rangle$ is also a non-empty subgame
of $\Gamma(G)$.

Informally, suppose that for each direct subgame $G^{w_{\alpha}}$ of
$G$ we can reduce the corresponding strategic game
$\Gamma(G^{w_{\alpha}})$ to a non-empty game $H^{w_{\alpha}}$.  Then
the strategic game $\Gamma(G)$ can be reduced to a strategic game the
strategies of which are obtained by intersecting for each player the
extensions of his strategy sets in all games $H^{w_{\alpha}}$. To
establish this lemma we do not assume that
$(w_{\alpha}, \alpha < \gamma)$ contains all children of $v$, which
makes it possible to proceed by induction.

\begin{proof}
We proceed by transfinite induction on the length $\gamma$ of $\rho$.

\medskip

\NI
\emph{Case 1.} $\gamma = 1$. Follows from Lemma \ref{lem:subgame2} with $J = \Gamma(G)$.

\medskip

\NI
\emph{Case 2.} $\gamma$ is a successor ordinal $>1$.

Suppose $\gamma = \delta + 1$. By the induction hypothesis
$\Gamma(G) \to^{[\rho^{\delta}]} \bigcap_{\alpha < \delta} 
\langle H^{w_{\alpha}} \rangle$, 
where $\rho^{\delta}$ is the concatenation of the sequences
$(\rho_{\alpha}, \alpha < \delta)$.
We also have by assumption 
$\Gamma(G^{w_{\delta}}) \to^{\rho_{\delta}} H^{w_{\delta}}$.

Note that the subgame $\bigcap_{\alpha < \delta} \langle H^{w_{\alpha}} \rangle$ of $\Gamma(G)$ does not depend on $G^{w_{\delta}}$,
so by Lemma \ref{lem:subgame2} we have that 
$\bigcap_{\alpha < \delta} \langle H^{w_{\alpha}} \rangle \to^{[\rho_{\delta}]} \bigcap_{\alpha < \delta} \langle H^{w_{\alpha}} \rangle \cap \langle H^{w_{\delta}} \rangle$.
By Note \ref{not:one}$(i)$ the claim follows.

\medskip

\NI
\emph{Case 3.} $\gamma$ is a limit ordinal.

By the induction hypothesis for all $\beta < \gamma$ 
$\Gamma(G) \to^{[\rho^\beta]} \bigcap_{\alpha < \beta} \langle H^{w_{\alpha}} \rangle$,
where $\rho^{\beta}$ is the concatenation of the sequences
$(\rho_{\alpha}, \alpha < \beta)$.
Then by Note \ref{not:one}$(ii)$ and by definition
$\Gamma(G) \to^{[\rho]} \bigcap_{\beta < \gamma} \bigcap_{\alpha < \beta} \langle H^{w_{\alpha}} \rangle$.
But
$\bigcap_{\beta < \gamma} \bigcap_{\alpha < \beta} \langle  H^{w_{\alpha}} \rangle
  = \bigcap_{\alpha < \gamma} \langle  H^{w_{\alpha}} \rangle$,
so the claim follows.
\end{proof}

The next lemma shows that when each subgame $H^{w_{\alpha}}$ of
$\Gamma(G^{w_{\alpha}})$ is trivial, under some natural assumptions
the subgame $\bigcap_{\alpha < \gamma} \langle H^{w_{\alpha}} \rangle$
of $\Gamma(G)$ can then be reduced in one step to a trivial game.

\begin{restatable}{lemma}{lemTrivial}
    \label{lem:trivial}
    Consider an extensive game $G$ with the root $v$.  Suppose that
  \begin{enumerate}[(a)]
\item $G$ has a subgame perfect equilibrium and  all subgame perfect equilibria of $G$ are payoff equivalent,
\item for all $w \in C(v)$, $\spe(G^{w}) \sse H^{w}$, where $H^{w}$ is
  a trivial subgame of $\Gamma(G^{w})$. 
  \end{enumerate}
\noindent Then for some set of strategies $A$ we have
  $\bigcap_{w \in C(v)} \langle H^{w} \rangle \to^A H'$, where $H'$ a
  trivial game and $\spe(G) \sse H'$. 
\end{restatable}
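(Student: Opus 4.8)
The plan is to exploit the fact that, because every $H^w$ is trivial, the game $K := \bigcap_{w \in C(v)} \langle H^w \rangle$ is, as far as outcomes are concerned, controlled entirely by the move of player $i := \turn(v)$ at the root. First I would record the key observation that for every joint strategy $s$ in $K$ one has $p(s) = p(H^{s_i(v)})$, where $p(H^{s_i(v)})$ denotes the (constant) outcome of the trivial game $H^{s_i(v)}$: the play of $s$ enters the direct subgame $G^{s_i(v)}$, and since $s^{s_i(v)} \in H^{s_i(v)}$ and $H^{s_i(v)}$ is trivial, the outcome is the constant outcome of that subgame. Thus in $K$ the payoff depends only on which child player $i$ selects at $v$.

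By assumption (a) there is an SPE $\sigma$ of $G$, all SPE being payoff equivalent; write $d := p(\sigma)$. I would first note that $\spe(G) \subseteq K$ (so $K$ is non-empty): for each SPE $\tau$ and each child $w$ the restriction $\tau^w$ is an SPE of $G^w$, hence lies in $\spe(G^w) \subseteq H^w$, so every component of $\tau$ lies in every $\langle H^w \rangle$. Writing $w^* := \sigma_i(v)$, the Nash condition for $\sigma$ at the root (comparing $\sigma$ with the deviations that switch the root move to another child and then continue with $\sigma$) gives $p_i(H^{w^*}) = p_i(\sigma) \geq p_i(H^w)$ for every $w \in C(v)$; so $p_i(\sigma) = \max_{w \in C(v)} p_i(H^w)$. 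I would then take $A$ to be the set of strategies $s_i$ of player $i$ in $K$ with $p_i(H^{s_i(v)}) < p_i(\sigma)$, and check that each such $s_i$ is weakly (indeed strictly) dominated in $K$ by the strategy $s_i'$ obtained from $s_i$ by resetting the root move to $w^*$; since the two differ only at $v$, $s_i'$ still lies in $K$, and by the opening observation $p_i(s_i, s_{-i}) = p_i(H^{s_i(v)}) < p_i(\sigma) = p_i(s_i', s_{-i})$ for all $s_{-i} \in K_{-i}$.

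The decisive step is to identify the game $H'$ obtained by removing $A$ from $K$ and to show it is trivial with outcome $d$. After the elimination the surviving strategies of player $i$ are exactly those selecting a child $w$ with $p_i(H^w) = p_i(\sigma)$, all other players being untouched, so by the opening observation the remaining outcomes are precisely $\{ p(H^w) \mid w \in C(v),\ p_i(H^w) = p_i(\sigma)\}$. The hard part is to prove that each such $p(H^w)$ is the full vector $d$, not merely that its $i$-th component equals $p_i(\sigma)$. For this I would, given such a $w$, modify $\sigma$ into $\sigma'$ by setting $\sigma'_i(v) := w$ and leaving everything else unchanged; since $\sigma'$ and $\sigma$ agree below the root, ${\sigma'}^u = \sigma^u$ for every node $u$ strictly below $v$, so all proper-subgame Nash conditions are inherited, and at the root the move $w$ is payoff-maximal for $i$ among the children while no other player moves at $v$, whence $\sigma'$ is again an SPE. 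Payoff equivalence (a) then forces $p(H^w) = p(\sigma') = p(\sigma) = d$, so $H'$ is trivial with outcome $d$.

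Finally I would verify $\spe(G) \subseteq H'$: any SPE $\tau$ lies in $K$ as already noted, and since $\tau$ is payoff equivalent to $\sigma$ we have $p_i(H^{\tau_i(v)}) = p_i(\tau) = p_i(\sigma)$, so $\tau_i \notin A$ and $\tau$ survives. The only genuine obstacle is the full-outcome argument of the third paragraph, which is exactly where hypothesis (a) --- payoff equivalence of all subgame perfect equilibria --- is indispensable, since it is what upgrades equality of the $i$-th component to equality of the whole outcome vector; the domination checks and the membership claims are routine consequences of the opening observation.
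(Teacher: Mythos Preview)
Your proof is correct and follows essentially the same route as the paper's: identify the set $A$ of player-$i$ strategies whose root move yields a suboptimal $i$-payoff, eliminate them by (strict) domination, and then use assumption~\emph{(a)} to upgrade equality of the $i$-th component to equality of the full outcome vector on the surviving root moves. The only cosmetic difference is that where the paper invokes Corollary~7 and Theorem~8 of \cite{AS21} to obtain the existence of $\max_{w} val_i^{w}$ and of an SPE with a prescribed root move in $W$, you derive both directly by reading off the Nash condition for $\sigma$ at the root and by modifying $\sigma$ only at $v$; this makes your argument slightly more self-contained but otherwise identical in structure.
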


\begin{proof}
  Let $H := \bigcap_{w \in C(v)} \langle H^{w} \rangle$. Note that $H$
  is a non-empty subgame of $\Gamma(G)$.  

Denote the unique outcome in the game $H^{w}$ by $val^{w}$, i.e., for
all joint strategies $s$ in $H^{w}$ we have $p(s) = val^{w}$. Then the
possible outcomes in $H$ are $val^{w}$, where $w \in C(v)$.  More
precisely, suppose that $i = \turn(v)$. Then if $s$ is a joint
strategy in $H$, then $p(s) = val^{w}$, where $s_i(v) = w$.

Take two strategies $t'_i$ and $t''_i$ of player $i$ in $H$ with
$t'_i(v) = w_1$ and $t''_i(v) = w_2$ such that
$val_i^{w_1} < val_i^{w_2}$.  This means that for any joint strategies
$s_{-i}$ from $H_{-i}$ we have $p_i(t'_i, s_{-i} < p_i(t''_i, s_{-i}$,
so $t'_i$ is weakly dominated in $H$ by $t''_i$ (actually, even
strictly dominated).

By assumption \emph{(a)} $G$ has a subgame perfect equilibrium, so
by Corollary 7 of \cite{AS21}
$\max \{val_i^{w} \mid w \in C(v)\}$ exists.  Denote it by $val_i$ and
let $W := \{w \in C(v) \mid val_i^{w} = val_i \}$. So $W$ is the set
of children $w$ of $v$ for which the corresponding value $val_{i}^{w}$
is maximal.  Finally, let $A$ be the set of strategies $t_i$ of player
$i$ in $H$ such that $t_i(v) \not\in W$.

By the above observation about $t'_i$ and $t''_i$ all strategies in
$A$ are weakly dominated in $H$.  By removing them from $H$ we get a
game $H'$ with the unique payoff $val_i$ for player $i$.  To prove
that $H'$ is trivial consider two joint strategies $s$ and $t$ in
$H'$.  Suppose that $s_i(v) = w_1$ and $t_i(v) = w_2$.  Then
$w_1, w_2 \in W$, $s^{w_1} \in H^{w_1}$, $t^{w_2} \in H^{w_2}$,
$p(s) = p(s^{w_1})$, and $p(t) = p(t^{w_2})$.
  
By Theorem 8 of \cite{AS21} subgame perfect equilibria $u'$ and $u''$
in $G$ exist such that $u'_i(v) = w_1$, $(u')^{w_1}$ is a subgame
perfect equilibrium in $G^{w_1}$, $u''_i(v) = w_2$, and $(u'')^{w_2}$
is a subgame perfect equilibrium in $G^{w_2}$.  Then
$p(u') = p((u')^{w_1})$ and $p(u'') = p((u'')^{w_2})$, so
$p((u')^{w_1}) = p((u'')^{w_2})$ by assumption \emph{(a)}.
Further, by assumption \emph{(b)} both $(u')^{w_1} \in H^{w_1}$ and
$(u'')^{w_2} \in H^{w_2}$, so since both subgames are trivial,
$p(s^{w_1}) = p((u')^{w_1})$ and $p(t^{w_2}) = p((u')^{w_2})$.
Consequently $p(s) = p(t)$, which proves that $H'$ is trivial.

To prove that $\spe(G) \sse H'$
consider a subgame perfect equilibrium $s$
in $G$.  Take some $u \in C(v)$. By assumption \emph{(b)},
$s^u \in H^u$, so $p_i(s^u) = val_i^u$ and, by the definition of
$\langle \cdot \rangle$, $s \in H$.  Suppose that $s_i(v)=w$.  By
Corollary 7 of \cite{AS21} $val_i^w = val_i$, i.e., $s_i(v) \in W$.
This means that $s_i \not\in A$ and thus $s \in H'$.
\end{proof}

\section{SPE-invariant games}
\label{sec:main}
We can now  prove the desired result. 

\begin{theorem}
  \label{thm:trivial}
  Consider an SPE-invariant extensive game $G$.
There exists a sequence $\rho$ of strategies of
players in $G$ and a subgame $H$ of $\Gamma(G)$ such that $\Gamma(G) \to^\rho H$, $H$ is trivial and $\spe(G) \subseteq H$.
\end{theorem}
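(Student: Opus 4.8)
The plan is to proceed by transfinite induction on the rank of the game tree $T$ of $G$. For the base case $\mathit{rank}(T) = 0$, the tree consists of a single node and $\Gamma(G)$ has the unique joint strategy $(\ES, \LL, \ES)$ with outcome $(0, \LL, 0)$; thus $\Gamma(G)$ is already trivial and coincides with $\spe(G)$. Taking $\rho$ to be the one-element sequence $(\ES)$ (the elimination of the empty set of strategies, which is vacuously legal and removes nothing) gives $\Gamma(G) \to^{\rho} \Gamma(G)$ with $H = \Gamma(G)$.

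For the inductive step, let $v$ be the root of $T$ and $C(v)$ its set of children, so that each direct subgame $G^w$ with $w \in C(v)$ has rank strictly smaller than $\mathit{rank}(T)$. The first key step is to observe that every direct subgame $G^w$ is again SPE-invariant, which is what licenses the induction hypothesis. Indeed, since $G$ is SPE-invariant it has a subgame perfect equilibrium $s$, and then $s^w$ is a subgame perfect equilibrium of $G^w$: for every node $u$ of $T^w$, which is also a node of $T$, the restriction $s^u$ is a Nash equilibrium, and $s^u = (s^w)^u$. Moreover, every subgame of $G^w$ is itself a subgame of $G$, so by the SPE-invariance of $G$ all subgame perfect equilibria in each such subgame are payoff equivalent. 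Hence $G^w$ is SPE-invariant.

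Now fix a well-ordering $(w_{\alpha}, \alpha < \gamma)$ of $C(v)$. By the induction hypothesis, for each $\alpha < \gamma$ there is a sequence $\rho_{\alpha}$ of strategies in $G^{w_{\alpha}}$ and a trivial subgame $H^{w_{\alpha}}$ of $\Gamma(G^{w_{\alpha}})$ with $\Gamma(G^{w_{\alpha}}) \to^{\rho_{\alpha}} H^{w_{\alpha}}$ and $\spe(G^{w_{\alpha}}) \sse H^{w_{\alpha}}$. Each $H^{w_{\alpha}}$, being trivial, is non-empty, so Lemma \ref{lem:two} applies: letting $\rho'$ be the concatenation of the sequences $\rho_{\alpha}$, we obtain $\Gamma(G) \to^{[\rho']} \bigcap_{\alpha < \gamma} \langle H^{w_{\alpha}} \rangle$. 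Conditions (a) and (b) of Lemma \ref{lem:trivial} are then both met: (a) holds because $G$ is SPE-invariant, so $G$ itself has a subgame perfect equilibrium and all its subgame perfect equilibria are payoff equivalent, and (b) holds because each $H^w$ is trivial with $\spe(G^w) \sse H^w$. Lemma \ref{lem:trivial} thus yields a set $A$ with $\bigcap_{w \in C(v)} \langle H^{w} \rangle \to^A H'$ for some trivial game $H'$ satisfying $\spe(G) \sse H'$. Since $\bigcap_{w \in C(v)} \langle H^{w} \rangle = \bigcap_{\alpha < \gamma} \langle H^{w_{\alpha}} \rangle$, Note \ref{not:one}$(i)$ concatenates the two reductions into $\Gamma(G) \to^{([\rho'], A)} H'$. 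Setting $\rho := ([\rho'], A)$ and $H := H'$ completes the induction.

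The main obstacle, and the only part requiring genuine argument beyond invoking the preliminary lemmas, is the inheritance of SPE-invariance by the direct subgames: one must verify both that a direct subgame actually possesses a subgame perfect equilibrium (obtained here by restricting one of $G$) and that the payoff-equivalence clause transfers (because subgames of a direct subgame are themselves subgames of $G$). By contrast, the transfinite bookkeeping — arbitrarily many children and limit stages in the concatenation — is entirely absorbed into Lemma \ref{lem:two} and Note \ref{not:one}, so no separate limit-ordinal case is needed at this level of the argument.
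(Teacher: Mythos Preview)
Your proof is correct and follows essentially the same approach as the paper: transfinite induction on the rank of the game tree, invoking Lemmas \ref{lem:two} and \ref{lem:trivial} in the inductive step after noting that direct subgames inherit SPE-invariance. You supply more detail than the paper does---in particular the explicit verification that each $G^w$ is SPE-invariant and the explicit concatenation via Note \ref{not:one}$(i)$---but the structure is identical.
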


\begin{proof}
  We proceed by induction on the rank of the game tree of $G$.
  For game trees of rank 0 all strategies are empty functions, so
  $\Gamma(G)$ is a trivial game with the unique joint strategy
  $(\ES, \LL, \ES)$ and $\spe(G) = \{(\ES, \LL, \ES)\}$, so the claim
  holds.  Suppose that the rank of the game tree of $G$ is
  $\alpha > 0$ and assume that claim holds for all extensive games
  with the game trees of rank smaller than $\alpha$.

  Let $v$ be the root of $G$. Each direct subgame of $G$ is
  SPE-invariant, so by the induction hypothesis for all $w \in C(v)$
  there exists a sequence $\rho^w$ of strategies of players in $G^w$
  and a subgame $H^w$ of $\Gamma(G^w)$ such that $\Gamma(G^w)
  \to^{\rho^w} H^w$, $H^w$ is trivial and $\spe(G^w) \sse H^w$.
The claim now follows by Lemmas \ref{lem:two} and \ref{lem:trivial}.
\end{proof}

\medskip

The following example illustrates the use of this theorem.
An extensive game is called \bfe{generic} if each payoff function is an
injective. 

\begin{figure}[ht]
\centering
\begin{minipage}{.4\textwidth}
  \centering  
  \tikzstyle{level 1}=[level distance=1.2cm, sibling distance=3cm]
  \tikzstyle{level 2}=[level distance=1.2cm, sibling distance=2.5cm]
\begin{tikzpicture}
 \node (r){$v(1)$}
 child{
   node (a){$S_1:(1,0)$}
   edge from parent
   node[left]{\scriptsize $S$}
   }
 child{
   node (b){$C_1(2)$}
   child{
     node (f){\small $S_2:(0,2)$}
     edge from parent
     node[left]{\scriptsize $S$}
   }
   child{
     node (g){\small $C_2:(2,1)$}
     edge from parent
     node[right]{\scriptsize $C$}
     edge from parent
   }
   edge from parent
   node[right]{\scriptsize $C$}
 };

\end{tikzpicture}
  
\captionof{figure}{Centipede game with 2 periods}
  \label{fig:centipede}
\end{minipage}%
\begin{minipage}{.6\textwidth}
  \centering
    \tikzstyle{level 1}=[level distance=1.2cm, sibling distance=3.5cm]
  \tikzstyle{level 2}=[level distance=1.2cm, sibling distance=3.5cm]
\begin{tikzpicture}
 \node (r){$C_{2t}(1)$}
 child{
   node (a){\small $S_{2t+1}:(x+1,y)$}
   edge from parent
   node[left]{\scriptsize $S$}
   }
 child{
   node (b){$C_{2t+1}(2)$}
   child{
     node (f){\small $S_{2t+2}:(x,y+3)$}
     edge from parent
     node[left]{\scriptsize $S$}
   }
   child{
     node (g){\small $C_{2t+2}:(x+2,y+2)$}
     edge from parent
     node[right]{\scriptsize $C$}
     edge from parent
   }
   edge from parent
   node[right]{\scriptsize $C$}
 };

\end{tikzpicture}

\captionof{figure}{From $t$ to $t+2$ periods}
 \label{fig:test2}
  
\end{minipage}
\end{figure}

\begin{example}
\label{exa:centipede}  
Recall that the centipede game, introduced in \cite{Ros81} (see also
\cite[pages 106-108]{OR94}), is a two-players extensive game played for
an even number of periods.  We define it inductively as follows.  The
game with 2 periods is depicted in Figure \ref{fig:centipede}. Here and
below the argument of each non-leaf is the player whose turn is to
move, and the leaves are followed by players' payoffs. The moves are
denoted by the letters $C$ and $S$.
The game with $2t+2$ periods is obtained from the game with $2t$
periods by replacing the leaf $C_{2t}$ by the tree depicted in Figure
\ref{fig:test2}.

By the the result of \cite[pages
108-109]{OR94}) each centipede game can be reduced by an iterated
elimination of weakly dominated strategies to a trivial game which
contains the unique subgame perfect equilibrium, with the outcome
$(1,0)$.
We now show that the same holds for an infinite version of the
centipede game $G$ in which player 2 begins the game by selecting an
even number $2t > 0$.  Subsequently, the centipede version with $2t$
periods is played.

Note that $G$ is SPE-invariant. Indeed, $G$ has infinitely many
subgame perfect equilibria (one for each first move of player 2), but
each of them yields the outcome $(1,0)$. Moreover, each subgame of $G$
is either a centipede game with $2t$ periods for some $t > 0$, or a
subgame of such a game. So each subgame of $G$ is a finite generic
game and thus has a unique subgame perfect equilibrium.

By Theorem \ref{thm:trivial} we can reduce $G$ by an infinite iterated
elimination of weakly dominated strategies to a trivial game which
contains all its subgame perfect equilibria.
Note that the strategy elimination sequence constructed in the proof of this theorem 
consists of for more than $\omega$ steps.
\HB
\end{example}

For finite extensive games, Theorem \ref{thm:trivial} extends the
original result reported in \cite[pages 108-109]{OR94}.  Namely, the
authors prove the corresponding result for finite extensive games that
are generic. In such games a unique subgame perfect equilibrium
exists, while we only claim that the game is SPE-invariant.

To clarify the relevance of this relaxation let us mention two classes
of well-founded extensive games that are SPE-invariant and that were
studied for finite extensive games. Following \cite{Bat97} we say that
an extensive game $(T, \turn, p_1, \LL, p_n)$ is \bfe{without relevant
  ties} if for all non-leaf nodes $u$ in $T$ the payoff function
$p_i$, where $\turn(u)=i$, is injective on the leaves of $T^u$.
This
is a more general property than being generic. The relevant property
for finite extensive games is that a game without relevant ties has a
unique subgame perfect equilibrium, see \cite{AS21a} for a
straightforward proof. In the case of well-founded games a direct
modification of this proof, that we omit, shows that every extensive
game without relevant ties has at most one subgame perfect
equilibrium. Further, if a game is without relevant ties, then so is
every subgame of it, so we conclude that well-founded games without
relevant ties are SPE-invariant.

Next, following \cite{MS97} we say that an extensive game
$(T, turn, p_1, \LL, p_n)$ satisfies the \bfe{transference of
  decisionmaker indifference (TDI)} condition if:
\[
  \begin{array}{l}
\mbox{$\fa i \in \{1, \ldots, n\} \: \fa r_i, t_i \in S_i \: \fa s_{-i} \in S_{-i}$} \\
\mbox{[$p_{i}(\leaf(r_i, s_{-i})) = p_{i}(\leaf(t_i, s_{-i})) \to
p(\leaf(r_i, s_{-i})) = p(\leaf(t_i, s_{-i}))$].}
  \end{array}
\]
where $S_i$ is the set of strategies of player $i$. 
Informally, this condition states that whenever for some player $i$, two of his strategies
$r_i$ and $t_i$ are indifferent w.r.t.~some joint strategy $s_{-i}$ of the other players
then this indifference extends to all players.

Strategic games that satisfy the TDI condition are of interest because of the main result of \cite{MS97} which states
that in finite games that satisfy this condition iterated elimination of weakly dominated strategies is order 
independent.\footnote{Alternative proofs of this result were given in \cite{Apt04} and \cite{Ost05}.}
The authors also give examples of natural games that satisfy this condition. Also strictly competitive games studied in the
next section satisfy this condition.

The following result extends an implicit result of \cite{MS97} to well-founded games.

\begin{theorem}
  \label{thm:TDIinvariance}
  Consider an extensive game $G$. Suppose that
  $G$ has finitely many outcomes and $G$ satisfies the TDI condition. Then $G$ is SPE-invariant.
\end{theorem}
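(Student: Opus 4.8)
The plan is to reduce everything to a single statement proved by transfinite induction on the rank of the game tree: \emph{every extensive game with finitely many outcomes satisfying the TDI condition has the property that all of its subgame perfect equilibria are payoff equivalent.} Existence of a subgame perfect equilibrium is not the issue, since it is furnished by Theorem \ref{thm:constantSPE} (as $G$ has finitely many outcomes). Moreover every subgame $G^w$ again has finitely many outcomes (its leaves form a subset of the leaves of $G$) and again satisfies TDI; this inheritance is routine, as strategies of $G^w$ lift to strategies of $G$ that first steer the play down to $w$ and then behave as prescribed, so the TDI implication for $G^w$ is an instance of the one for $G$. Granting the italicized statement, both $G$ and each of its subgames possess a subgame perfect equilibrium whose subgame perfect equilibria are all payoff equivalent, which is exactly SPE-invariance.

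For the induction, the base case (rank $0$) is immediate, since then $\Gamma(G)$ has a single joint strategy. For the inductive step, let $v$ be the root, $i := \turn(v)$, and let $s, t$ be two subgame perfect equilibria of $G$, with $s_i(v) = w_s$ and $t_i(v) = w_t$. Each direct subgame $G^w$ satisfies the induction hypothesis, so all of its subgame perfect equilibria are payoff equivalent; write $val^w := p(\sigma)$ for any subgame perfect equilibrium $\sigma$ of $G^w$. Since the restriction of a subgame perfect equilibrium of $G$ to any subgame is again a subgame perfect equilibrium there, $p(s) = p(s^{w_s}) = val^{w_s}$ and $p(t) = val^{w_t}$. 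First I would establish $val_i^{w_s} = val_i^{w_t}$: applying the Nash condition at $v$ to $s$ against the deviation $s'_i$ that selects $w_t$ at $v$ and then follows $s^{w_t}_i$ on $T^{w_t}$, the play of $(s'_i, s_{-i})$ reaches $w_t$ and thereafter coincides with $\play(s^{w_t})$, so $p_i(s'_i, s_{-i}) = val_i^{w_t}$ and hence $val_i^{w_s} \geq val_i^{w_t}$; the symmetric argument applied to $t$ gives the reverse inequality, so the two values agree, say $val_i$.

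The heart of the proof, and the step I expect to be the main obstacle, is upgrading this equality of player $i$'s payoffs to equality of the \emph{entire} outcome vectors $val^{w_s}$ and $val^{w_t}$. Comparing $s$ and $t$ directly fails, because their opponent profiles differ and TDI constrains only variations of one player against a \emph{fixed} opponent profile; the device that resolves this is to keep the opponents' strategies $s_{-i}$ of the equilibrium $s$ fixed while rerouting player $i$ into $G^{w_t}$ using $s$'s own sub-equilibrium there. Concretely, let $r_i$ be the strategy of player $i$ that chooses $w_t$ at $v$ and follows $s^{w_t}_i$ on $T^{w_t}$. Then $\leaf(r_i, s_{-i}) = \leaf(s^{w_t})$, so $p(\leaf(r_i, s_{-i})) = val^{w_t}$ and $p_i(\leaf(r_i, s_{-i})) = val_i^{w_t} = val_i = p_i(s) = p_i(\leaf(s_i, s_{-i}))$. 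Since player $i$'s payoffs at $\leaf(r_i, s_{-i})$ and $\leaf(s_i, s_{-i})$ coincide, the TDI condition, applied to player $i$, the strategies $r_i$ and $s_i$, and the profile $s_{-i}$, yields $p(\leaf(r_i, s_{-i})) = p(\leaf(s_i, s_{-i}))$, i.e. $val^{w_t} = val^{w_s}$. Hence $p(t) = p(s)$, completing the induction and thereby the proof that $G$ is SPE-invariant.
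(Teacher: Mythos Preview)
Your proof is correct and takes a genuinely different route from the paper's. The paper reduces $G$ to an auxiliary \emph{finite} extensive game $H$ obtained by retaining, for each outcome of $G$, a single leaf with that outcome (making sure to keep at least one leaf per SPE outcome), then removes unreachable nodes. It observes that $H$ inherits TDI, invokes a result for finite TDI games (Theorem 12 of \cite{AS21a}, implicit in Marx--Swinkels) to conclude that all SPEs of $H$ are payoff equivalent, and finally argues that any two SPEs of $G$ have outcomes represented by SPEs of $H$, so they must coincide. Existence comes from Theorem \ref{thm:constantSPE}, and inheritance of TDI by subgames is handled exactly as you do.

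Your approach avoids both the auxiliary finite game and the external citation: it is a direct transfinite induction on the rank of the game tree, using only the Nash best-response inequality at the root to equate player $i$'s payoffs across the two children selected by the two SPEs, and then a single invocation of TDI (against the fixed opponent profile $s_{-i}$, rerouting player $i$ via $s^{w_t}_i$) to upgrade this to equality of full outcome vectors. This is more elementary and self-contained; it also makes transparent exactly where TDI enters. The paper's approach, by contrast, is shorter on the page because it outsources the core combinatorics to the finite case already treated in \cite{AS21a}, at the cost of the somewhat delicate leaf-selection construction and the need to match SPEs of $G$ with SPEs of $H$.
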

\begin{proof}
   We reduce the game $G$ to a finite game $H$ as follows. First,
   consider the set of all leaves of the game tree $T$ of $G$ that are
   the ends of the plays corresponding with a subgame perfect
   equilibrium. Next, for each outcome associated with a subgame
   perfect equilibrium retain in this set just one leaf with this
   outcome. By assumption the resulting set $L$ is finite.

   Next, order the leaves arbitrarily. Following this ordering
   remove all leaves with an outcome already associated with an
   earlier leaf, but ensuring that the leaves from $L$ are retained.  Let
   $M$ be the resulting set of leaves.  Finally, remove all nodes of
   $T$ from which no leaf in $M$ can be reached.

   The resulting tree corresponds to a finite extensive game $H$ in
   which all the outcomes possible in $G$ are present.  Further, all
   the leaves of $H$ are also leaves of $G$, so $H$ satisfies the TDI
   condition since $G$ does.  So by Theorem 12 of \cite{AS21a} (that is
   implicit in \cite{MS97}) all subgame perfect equilibria of $H$ are
   payoff equivalent.

   Further, by Theorem \ref{thm:constantSPE} $G$ has a subgame perfect
   equilibrium.  Consider two subgame perfect equilibria $s$ and
   $t$ in $G$ with the outcomes $p(s)$ and $p(t)$. By construction two
   subgame perfect equilibria $s'$ and $t'$ in $H$ exist such that
   $p(s) = p(s')$ and $p(t) = p(t')$. We conclude that all subgame
   perfect equilibria of $G$ are payoff equivalent.

   To complete the proof it suffice to note that if an extensive game
   $G$ satisfies the TDI condition, then so does every subgame of it.
   Indeed, consider a subgame $G^w$ of $G$. Let $i = \turn(w)$ and
   take $r^w_i, t^w_i \in S^w_i$ and $s^w_{-i} \in S^w_{-i}$. Extend
   these strategies to the strategies $r_i, t_i$ and $s_{-i}$ in the
   game $G$ in such a way that $w$ lies both on $\play(r_i, s_{-i})$
   and on $\play(t_i, s_{-i})$. Then
   $p(r^w_i, s^w_{-i}) = p(r_i, s_{-i})$ and
   $p(t^w_i, s^w_{-i}) = p(t_i, s_{-i})$, so the claim follows.
 \end{proof}
 
 \begin{corollary}
The claim of Theorem \ref{thm:trivial} holds for extensive games
with  finitely many outcomes that satisfy the TDI condition.
\end{corollary}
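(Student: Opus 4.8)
The plan is to obtain the corollary as an immediate consequence of the two theorems just established, with no further work required. The target statement of Theorem \ref{thm:trivial} is a property of \emph{SPE-invariant} extensive games, so all I need to do is verify that the hypotheses of the corollary place $G$ in that class and then invoke the theorem.

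First I would observe that $G$ is assumed to have finitely many outcomes and to satisfy the TDI condition. These are precisely the two hypotheses of Theorem \ref{thm:TDIinvariance}, so I would apply that theorem directly to conclude that $G$ is SPE-invariant, i.e.\ that $G$ has a subgame perfect equilibrium and that in every subgame of $G$ all subgame perfect equilibria are payoff equivalent.

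Having established SPE-invariance, I would then apply Theorem \ref{thm:trivial} to $G$. That theorem guarantees the existence of a sequence $\rho$ of strategies of players in $G$ and a subgame $H$ of $\Gamma(G)$ with $\Gamma(G) \to^{\rho} H$, where $H$ is trivial and $\spe(G) \sse H$. This is exactly the claim of Theorem \ref{thm:trivial} for $G$, which is what the corollary asserts.

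There is essentially no obstacle here: the only substantive content has already been discharged in Theorem \ref{thm:TDIinvariance} (showing TDI plus finitely many outcomes yields SPE-invariance) and Theorem \ref{thm:trivial} (the reduction for SPE-invariant games). The corollary merely composes these two results, so the proof is a one-line appeal to each in turn.
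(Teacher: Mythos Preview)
Your proposal is correct and matches the paper's intent: the corollary is stated without proof precisely because it is an immediate combination of Theorem~\ref{thm:TDIinvariance} and Theorem~\ref{thm:trivial}, exactly as you describe.
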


\NI
\textbf{Conjecture} Every extensive game that satisfies the TDI-condition
   is SPE-invariant. 
\II

If the conjecture is true, Theorem \ref{thm:trivial} holds for all
extensive games that satisfy the TDI condition.  An example of a game
with infinitely many outcomes that satisfies the TDI condition is the
infinite version of the centipede game from Example
\ref{exa:centipede}.


\section{Strictly competitive extensive games}
\label{sec:strictly}
In some games, for instance, the infinite version of the centipede
game from Example \ref{exa:centipede}, infinite rounds of elimination
of weakly dominated strategies are needed to solve the game.  In this
section, we focus on maximal elimination of weakly dominated
strategies and identify a subclass of extensive games for which we can
provide a finite bound on the number of elimination steps required to
solve the game. The outcome is our second main result which is a generalization 
of the following result due to \cite{Ewe02} to a class of well-founded games.
\II

\NI
\textbf{Theorem} Every finite extensive zero-sum game with $n$ outcomes can be reduced to
a trivial game by the maximal iterated elimination of weakly
dominated strategies in $n-1$ steps.
\II

We first present some auxiliary results. Their proofs follow our
detailed exposition in \cite{AS21a} of the proofs in \cite{Ewe02}
generalized to strictly competitive games, now appropriately modified
to infinite games.

\subsection{Preliminary results}
We denote by $H^1$ the subgame of $H$ obtained by the elimination of
all strategies that are weakly dominated in $H$, and put $H^{0} := H$
and $H^{k+1} := (H^k)^1$, where $k \geq 1$.  Abbreviate the phrase
`iterated elimination of weakly dominated strategies' to IEWDS.  If
for some $k$, $H^k$ is a trivial game we say that $H$ \bfe{can be
  solved by the IEWDS}. 


In infinite strategic games with finitely many outcomes it is possible
that all strategies of a player are weakly dominated as shown in the
Example \ref{exa:0-1}.
Then by definition, $H^1$ is an empty game. We define a class of
games, called WD-admissible games in which this does not
happen.

\begin{example} \label{exa:0-1}  
Consider the following infinite zero-sum strategic game with 
  two outcomes:
  
\begin{center}
\begin{game}{5}{5}
        & $A$    & $B$      & $C$   & $D$    &$\dots$ \\
$A$     &$0,0$   &$0,0$     &$0,0$  &$0,0$   &$\dots$ \\
$B$     &$0,0$   &$1,-1$    &$0,0$  &$0,0$   &$\dots$ \\
$C$     &$0,0$   &$1,-1$    &$1,-1$ &$0,0$   &$\dots$ \\
$D$     &$0,0$   &$1,-1$   &$1,-1$  &$1,-1$   &$\dots$ \\
$\dots$ & $\dots$ & $\dots$ & $\dots$ & $\dots$ & $\dots$
\end{game}
\end{center}  

This game has a Nash equilibrium, namely $(A,A)$, but each strategy of
the row player is weakly dominated. So after one round of elimination
the empty game is reached.
\HB
\end{example}

Consider a strategic game $H$. We say that a strategy is
\bfe{undominated} if no strategy weakly dominates it. Next, we say
that $H$ is \bfe{WD-admissible} if for all subgames $H'$ of it the
following holds: \textit{each strategy is undominated or is weakly
  dominated by an undominated strategy}.
%
Intuitively, a strategic game $H$ is WD-admissible if in every subgame
$H'$ of it, for every strategy $s_i$ in $H'$ the relation `is weakly
dominated' in $H'$ has a maximal element above $s_i$. The crucial
property of WD-admissible games is formalised in the following lemma
whose proof follows directly by induction.

  \begin{lemma}
  \label{lem:weakPI}
  Let $H := (H_1, \LL, H_n, p_1, \LL, p_n)$ be a WD-admissible
  strategic game and for $k \geq 1$, let $H^k := (H^k_1, \LL, H^k_n, p_1, \LL, p_n)$. Then
$\fa i \in \{1, \LL, n\} \ \fa s_i \in H_i \ \te t_i \in H^k_i \ \fa s_{-i} \in H^k_{-i}:
p_i(t_i,s_{-i}) \geq  p_i(s_i,s_{-i})$.
\end{lemma}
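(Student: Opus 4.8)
The plan is to prove the statement by induction on $k$, using WD-admissibility to supply, at each stage, a dominating strategy that survives one further round of elimination. The single fact I would isolate first is a one-step consequence of WD-admissibility: in any WD-admissible game $G$, for every strategy $s_i \in G_i$ there is a strategy $t_i \in G_i^1$ with $p_i(t_i, s_{-i}) \geq p_i(s_i, s_{-i})$ for all $s_{-i} \in G_{-i}$. This is immediate from the definition applied to the subgame $G$ itself: either $s_i$ is already undominated, in which case it is retained in $G^1$ and I take $t_i := s_i$; or $s_i$ is weakly dominated by an undominated strategy $t_i$, which (being undominated) lies in $G_i^1$ and, by definition of weak domination, satisfies $p_i(t_i, s_{-i}) \geq p_i(s_i, s_{-i})$ throughout $G_{-i}$. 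Here I use that $G_i^1$ consists of exactly the undominated strategies of player $i$ in $G$.

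For the induction itself, the base case $k = 1$ is precisely this one-step observation applied to $H$ (restricting the conclusion from all of $H_{-i}$ to $H_{-i}^1 \sse H_{-i}$). For the inductive step, assuming the claim for $k$, I would fix $i$ and $s_i \in H_i$; the induction hypothesis produces $t_i \in H_i^k$ with $p_i(t_i, s_{-i}) \geq p_i(s_i, s_{-i})$ for all $s_{-i} \in H_{-i}^k$. The crucial point is that $H^k$, being a subgame of $H$, is again WD-admissible, since every subgame of $H^k$ is itself a subgame of $H$. Thus the one-step observation applies to $H^k$ and yields a strategy $t'_i \in (H^k)_i^1 = H_i^{k+1}$ with $p_i(t'_i, s_{-i}) \geq p_i(t_i, s_{-i})$ for all $s_{-i} \in H_{-i}^k$. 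Because $H_{-i}^{k+1} \sse H_{-i}^k$, I can chain the two inequalities to obtain $p_i(t'_i, s_{-i}) \geq p_i(s_i, s_{-i})$ for all $s_{-i} \in H_{-i}^{k+1}$, which is exactly the required conclusion for $k+1$.

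The step that most needs care is the inheritance of WD-admissibility by $H^k$ together with the attendant fact that elimination never empties a player's strategy set, so the dominating undominated strategy genuinely survives into the next round. Both follow directly from the definition: WD-admissibility guarantees, for each strategy, an undominated strategy weakly above it, hence at least one undominated strategy per player in every subgame, and these are exactly the strategies retained in passing from $G$ to $G^1$. Everything else is the transitivity of $\geq$ combined with the monotonicity $H_{-i}^{k+1} \sse H_{-i}^k$ of the surviving opponent profiles, so no further computation is required.
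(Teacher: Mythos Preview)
Your proof is correct and is precisely the induction the paper has in mind; the authors only say the lemma ``follows directly by induction'' without writing out the details, and your argument supplies them in the natural way.
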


A two player strategic game $H=(H_1, H_2, p_1,
p_2)$ is called \bfe{strictly competitive} if
$
\fa i \in \{1,2\} \ \fa s,s' \in S :
p_i(s) \geq p_i(s') \text{ iff } p_{\ibar}(s) \leq p_{\ibar}(s')$.
For $i \in \{1,2\}$ we define
$maxmin_{i}(H) := \max_{s_{i} \in H_i} \min_{s_{-i} \in H_{-i}} p_i(s_i, s_{-i})$.
We allow $- \infty$ and $\infty$ as minima and maxima, so
$maxmin_{i}(H)$ always exists.  When $maxmin_{i}(H)$ is finite we call
any strategy $s^{*}_i$ such that $\min_{s_{-i} \in H_{-i}} p_i(s^*_i,
s_{-i}) = maxmin_{i}(H)$ a \bfe{security strategy} for player $i$ in
$H$.

We shall reuse the following auxiliary results from \cite{AS21a}.

\begin{note}
  \label{lem:scSamePayoff}
  Let $H = (H_1, H_2, p_1, p_2)$ be a strictly competitive strategic game. Then
\[
\fa i \in \{1,2\} \ \fa s,s' \in S : p_i(s) = p_i(s') \text{ iff } p_{\ibar}(s) = p_{\ibar}(s').
\]
\end{note}

This simply means that every strictly competitive strategic game
satisfies the TDI condition.

\begin{lemma} \label{lem:sc-sum1}
 Consider a strictly competitive strategic game $H$ with a Nash
  equilibrium $s$.  Suppose that for some $i \in \{1,2\}$, $t_i$
  weakly dominates $s_i$.  Then $(t_i, s_{-i})$ is also a Nash
  equilibrium.
\end{lemma}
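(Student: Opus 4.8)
The plan is to prove Lemma \ref{lem:sc-sum1} directly from the definitions by showing that the new joint strategy $(t_i, s_{-i})$ cannot be profitably deviated from by either player. Since $s$ is a Nash equilibrium and $t_i$ weakly dominates $s_i$, I expect that moving from $s_i$ to $t_i$ keeps player $i$ at least as well off against $s_{-i}$, while the key subtlety is showing player $-i$ still has no incentive to deviate. First I would establish the easy half: that player $i$ has no profitable deviation at $(t_i, s_{-i})$. For this, take any strategy $r_i$ of player $i$. Since $s$ is a Nash equilibrium, $p_i(s_i, s_{-i}) \geq p_i(r_i, s_{-i})$. Since $t_i$ weakly dominates $s_i$, we have $p_i(t_i, s_{-i}) \geq p_i(s_i, s_{-i})$. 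Chaining these gives $p_i(t_i, s_{-i}) \geq p_i(r_i, s_{-i})$, so player $i$ cannot improve.

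The harder half is to show that player $-i$ has no profitable deviation from $(t_i, s_{-i})$, and this is where I expect the main obstacle. The natural approach is to first argue that the outcome does not change when we replace $s_i$ by $t_i$, i.e. $p(t_i, s_{-i}) = p(s_i, s_{-i})$, and then leverage the fact that $s$ was already a Nash equilibrium. To get the payoff equality, I would combine weak domination with strict competitiveness: weak domination gives $p_i(t_i, s_{-i}) \geq p_i(s_i, s_{-i})$. If this were a strict inequality, then by the strictly competitive condition $p_{\ibar}(t_i, s_{-i}) < p_{\ibar}(s_i, s_{-i})$, and I would need to derive a contradiction—possibly by exhibiting a profitable deviation for player $-i$ at $s$, contradicting that $s$ is a Nash equilibrium. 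Concretely, if player $-i$ is strictly worse off at $(t_i, s_{-i})$ than at $s$, one wants to show player $-i$ could have deviated at $s$; but this direction is delicate because the deviation would have to be against $t_i$, not $s_i$. The cleaner route is probably to show $p_i(t_i, s_{-i}) = p_i(s_i, s_{-i})$ must hold, which then via Note \ref{lem:scSamePayoff} forces $p_{\ibar}(t_i, s_{-i}) = p_{\ibar}(s_i, s_{-i})$, so the full outcome is preserved.

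To pin down the payoff equality for player $i$, I would argue as follows. Suppose toward contradiction that $p_i(t_i, s_{-i}) > p_i(s_i, s_{-i})$. Consider player $-i$'s best response. Since $s$ is a Nash equilibrium, $s_{-i}$ is a best response to $s_i$, giving $p_{\ibar}(s_i, s_{-i}) \geq p_{\ibar}(s_i, r_{-i})$ for all $r_{-i}$. By the strictly competitive condition applied to player $i$'s strict gain, player $-i$ strictly loses: $p_{\ibar}(t_i, s_{-i}) < p_{\ibar}(s_i, s_{-i})$. Now consider any best response $r_{-i}$ of player $-i$ against $t_i$; by strict competitiveness and the minimax structure, one can show this does not actually improve player $-i$'s position relative to $s_{-i}$ against $t_i$, and the contradiction emerges from comparing the security values. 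Once the outcome equality $p(t_i, s_{-i}) = p(s_i, s_{-i})$ is established, the Nash property for player $-i$ transfers immediately: for any $r_{-i}$, $p_{\ibar}(t_i, s_{-i}) = p_{\ibar}(s_i, s_{-i}) \geq p_{\ibar}(s_i, r_{-i})$, and since player $i$'s strategy is fixed at $t_i$ this compares against the right quantity once we note the opponent's deviation is evaluated against $t_i$; handling this last comparison carefully, again using strict competitiveness to relate payoffs against $t_i$ and against $s_i$, is the crux of the argument.
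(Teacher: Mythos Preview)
The paper does not actually prove this lemma in the text you were given; it is stated as one of the ``auxiliary results from \cite{AS21a}'' and simply reused. So there is no in-paper proof to compare against, and your proposal must stand on its own.

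Your overall structure is right, but you manufacture a difficulty that is not there, and in doing so leave the argument incomplete. The payoff equality $p_i(t_i,s_{-i}) = p_i(s_i,s_{-i})$ does \emph{not} require any contradiction argument, best responses of player $-i$, or ``security values'': you already wrote down the key inequality in your first paragraph. Since $s$ is a Nash equilibrium, $s_i$ is a best response to $s_{-i}$, so in particular $p_i(s_i,s_{-i}) \geq p_i(t_i,s_{-i})$. Combined with weak domination, $p_i(t_i,s_{-i}) \geq p_i(s_i,s_{-i})$, this gives equality in one line. Your attempted contradiction (``the contradiction emerges from comparing the security values'') never actually produces a contradiction as written; fortunately it is unnecessary.

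Once you have $p_i(t_i,s_{-i}) = p_i(s_i,s_{-i})$ and hence (by Note~\ref{lem:scSamePayoff}) $p_{\ibar}(t_i,s_{-i}) = p_{\ibar}(s_i,s_{-i})$, the second half is the clean chain you gesture at but should write out explicitly: for any $r_{-i}$,
\[
p_{\ibar}(t_i,s_{-i}) = p_{\ibar}(s_i,s_{-i}) \geq p_{\ibar}(s_i,r_{-i}) \geq p_{\ibar}(t_i,r_{-i}),
\]
where the first inequality is the Nash condition for player $-i$ at $s$, and the second follows from weak domination $p_i(t_i,r_{-i}) \geq p_i(s_i,r_{-i})$ together with strict competitiveness. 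That completes the proof.
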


\begin{lemma} \label{lem:scTwo}
  Consider a strictly competitive strategic game $H$ with two outcomes that has a Nash
  equilibrium. Then $H^1$ is a trivial game.
\end{lemma}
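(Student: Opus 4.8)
The plan is to exploit the rigidity of having only two outcomes: with two outcomes each outcome is extreme for \emph{both} players, so at any Nash equilibrium one player already receives his best possible payoff, and I will show that his equilibrium strategy forces that outcome.

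First I would name the two outcomes. Since $H$ is strictly competitive the two outcomes are oppositely ranked (this is immediate from the definition, and also from Note \ref{lem:scSamePayoff}): write them as $o_a = (a,c)$ and $o_b = (b,d)$, where $a,b$ are the two $p_1$-values and $c,d$ the corresponding $p_2$-values, with $a < b$ and $c > d$ (the strictness of $c>d$ uses that $a\neq b$ together with Note \ref{lem:scSamePayoff}). Thus in $o_a$ player $1$ gets his minimum and player $2$ his maximum, while in $o_b$ the situation is reversed. Let $s$ be the given Nash equilibrium. Its outcome is $o_a$ or $o_b$, so exactly one player receives his maximal payoff at $s$; by the symmetry of strict competitiveness under exchanging the two players I may assume the outcome of $s$ is $o_b$, so that $p_1(s) = b$.

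The core step is to observe that $s_1$ forces the outcome $o_b$. Indeed, since $s$ is a Nash equilibrium, $s_2$ is a best response to $s_1$, so $\max_{t_2 \in H_2} p_2(s_1, t_2) = p_2(s) = d$; as $p_2$ takes only the values $c > d$, this means $p_2(s_1, t_2) = d$, i.e.\ the outcome is $o_b$, for \emph{every} $t_2 \in H_2$. Hence $p_1(s_1, t_2) = b$ for all $t_2$, the maximal payoff of player $1$. Consequently, for any strategy $u_1$ of player $1$ we have $p_1(s_1, t_2) = b \geq p_1(u_1, t_2)$ for all $t_2$, with strict inequality at any $t_2$ against which $u_1$ yields $a$. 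Therefore $s_1$ weakly dominates every player-$1$ strategy that does not force $o_b$, so all such strategies disappear in $H^1$; whereas every $o_b$-forcing strategy attains player $1$'s maximum against each opponent strategy and is therefore undominated, hence survives. Thus every surviving strategy of player $1$ forces $o_b$, and so every joint strategy of $H^1$ carries the single outcome $o_b$.

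The remaining, and I expect main, obstacle is to guarantee that $H^1$ is non-empty, so that a single surviving outcome really means \emph{trivial} rather than \emph{empty}. Player $1$ is fine, since $s_1$ is undominated and survives. The danger is entirely on player $2$'s side, where there may be no undominated strategy at all: this is exactly the phenomenon of Example \ref{exa:0-1}, a strictly competitive two-outcome game possessing a Nash equilibrium whose $H^1$ is nevertheless empty. To exclude it I would invoke the WD-admissibility of $H$ (or, for the finite version inherited from \cite{AS21a}, its finiteness), which forces $H^1$ to be non-empty. Combined with the previous paragraph, the unique surviving outcome $o_b$ then makes $H^1$ trivial, as claimed.
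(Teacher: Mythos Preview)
The paper does not supply its own proof of this lemma; it simply lists it among the ``auxiliary results from \cite{AS21a}'' that are reused. Your argument is the natural one and is correct once non-emptiness of $H^1$ is secured: at a Nash equilibrium one of the two players, say player~$1$, receives his maximal payoff, and then the best-response condition for player~$2$ forces $p_2(s_1,t_2)$ to equal the smaller of the two $p_2$-values for every $t_2$; hence $s_1$ guarantees the better outcome against every opponent strategy, weakly dominates every strategy of player~$1$ that does not, and is itself undominated. All surviving strategies of player~$1$ therefore force that same outcome, so $H^1$ has at most one outcome.

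You are also right to flag the non-emptiness issue, and in fact Example~\ref{exa:0-1} is a genuine counterexample to the lemma \emph{exactly as stated here}: that game is strictly competitive (zero-sum), has two outcomes, has the Nash equilibrium $(A,A)$, and yet $H^1$ is empty rather than trivial. So the lemma is only correct under an additional hypothesis---finiteness, which is the standing assumption in \cite{AS21a}, or WD-admissibility in the present setting. In the single place the paper invokes Lemma~\ref{lem:scTwo} (the base case $m=2$ of Theorem~\ref{thm:sciewds}) the game $\Gamma^k(G)$ is a subgame of a WD-admissible game and hence itself WD-admissible, so the application goes through; but the statement as recorded here is missing that hypothesis, and your proof pinpoints exactly where it is needed.
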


\noindent The following result is standard (for the used formulation see, e.g.,
\cite[Theorem 5.11, page 235]{Rit02}).

\begin{theorem}
  \label{thm:scminimax}
  Consider a strictly competitive strategic game $H$.
\begin{enumerate}[(i)]
\item All Nash equilibria of $H$ yield the same payoff for player $i$,
  namely $maxmin_{i}(H)$.
\item All Nash equilibria of $H$ are of the form  $(s^{*}_1, s^{*}_2)$
where each $s^{*}_i$ is a security strategy for player $i$.  
\end{enumerate}
\end{theorem}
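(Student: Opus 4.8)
The plan is to fix an arbitrary Nash equilibrium $s = (s_1, s_2)$ of $H$ and show directly that $p_1(s) = maxmin_{1}(H)$, that $p_2(s) = maxmin_{2}(H)$, and that each $s_i$ is a security strategy, thereby establishing (i) and (ii) simultaneously. The whole argument hinges on re-expressing the two equilibrium conditions, which are phrased in terms of each player's own payoff, as statements about player 1's payoff alone; strict competitiveness is exactly the tool that permits this translation.

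First I would use strict competitiveness to observe that, against the fixed $s_1$, player 2 maximizing his own payoff is the same as minimizing player 1's payoff. Concretely, the equilibrium condition for player 2, namely $p_2(s_1, s_2) \geq p_2(s_1, s'_2)$ for all $s'_2$, combined with the defining ``iff'' of strict competitiveness applied to $i = 2$, yields $p_1(s_1, s_2) \leq p_1(s_1, s'_2)$ for all $s'_2$. Thus $s_2$ attains $\min_{s'_2} p_1(s_1, s'_2)$ and this minimum equals $v := p_1(s)$. The equilibrium condition for player 1 gives directly that $p_1(s_1, s_2) = \max_{s'_1} p_1(s'_1, s_2) = v$.

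Next I would pin down $maxmin_{1}(H)$ by a two-sided estimate. Since $\min_{s'_2} p_1(s_1, s'_2) = v$ is one of the terms in the outer maximum defining $maxmin_{1}(H)$, we get $maxmin_{1}(H) \geq v$. For the reverse inequality, for every $s'_1$ we have $\min_{s'_2} p_1(s'_1, s'_2) \leq p_1(s'_1, s_2) \leq v$, the first step because $s_2$ is an admissible choice of $s'_2$ and the second by the equilibrium condition for player 1; taking the supremum over $s'_1$ gives $maxmin_{1}(H) \leq v$. Hence $maxmin_{1}(H) = v$, which is finite, and the identity $\min_{s'_2} p_1(s_1, s'_2) = maxmin_{1}(H)$ is precisely the statement that $s_1$ is a security strategy for player 1. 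The symmetric argument, with the roles of the players interchanged and strict competitiveness applied to $i = 1$ to turn player 1's equilibrium condition into minimization of player 2's payoff, then yields $p_2(s) = maxmin_{2}(H)$ and shows that $s_2$ is a security strategy for player 2. Together these give both claims.

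I do not expect a serious obstacle, as this is the classical minimax characterization of equilibria in strictly competitive games; the only points needing care are the bookkeeping around the admitted values $\pm\infty$ in the definition of $maxmin_i$ and the attainment of the relevant extrema in the infinite setting. Both are handled automatically by working from an actual equilibrium: the presence of $s$ forces $v$, $\min_{s'_2} p_1(s_1, s'_2)$, and $\max_{s'_1} p_1(s'_1, s_2)$ all to equal the same real number, so finiteness and attainment come for free and no appeal to the existence of a value (nor to any minimax theorem) is required.
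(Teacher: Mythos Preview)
Your argument is correct and is precisely the classical derivation of the minimax characterization of Nash equilibria in strictly competitive games. The paper, however, does not prove this theorem at all: it introduces the statement with the remark that ``the following result is standard'' and simply cites \cite[Theorem 5.11, page 235]{Rit02}. So there is nothing to compare your approach against; you have supplied the standard proof that the authors chose to omit.
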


By modifying the proof of Corollary 5 from \cite{AS21a}
appropriately, we have the following.

\begin{restatable}{lemma}{lemHOne}
  \label{lem:h1} Consider a WD-admissible strictly
  competitive strategic game $H$ that has a Nash equilibrium. Then $H^1$
  has a Nash equilibrium, as well, and for all $i \in \{1,2\}$,
  $maxmin_{i}(H) = maxmin_{i}(H^1)$.
\end{restatable}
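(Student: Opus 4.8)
The plan is to prove the two claims of Lemma~\ref{lem:h1} in sequence, leveraging the existence of a Nash equilibrium in $H$ together with the WD-admissibility hypothesis. First I would fix a Nash equilibrium $s = (s_1, s_2)$ of $H$. Since $H$ is WD-admissible, by the defining property applied to the full game $H$ itself, each $s_i$ is either undominated or weakly dominated by an undominated strategy. Using Lemma~\ref{lem:sc-sum1}, I can upgrade $s$ to a Nash equilibrium whose components survive one round of elimination: if $s_i$ is weakly dominated by an undominated $t_i$, then $(t_i, s_{-i})$ is again a Nash equilibrium by Lemma~\ref{lem:sc-sum1}, and $t_i$ lies in $H^1_i$ because it is undominated. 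Applying this for each player $i \in \{1,2\}$ in turn yields a joint strategy $s^* = (s^*_1, s^*_2)$ with each $s^*_i \in H^1_i$, which is a Nash equilibrium of $H$. I then need to argue that $s^*$ remains a Nash equilibrium \emph{in the smaller game} $H^1$: this holds because a deviation of player $i$ in $H^1$ is a deviation available in $H$ (strategy sets only shrank), and $s^*$ was a best response in the larger game, so it is a fortiori a best response in $H^1$. Hence $H^1$ has a Nash equilibrium.

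For the second claim, equality of the maxmin values, I would prove the two inequalities separately. The inequality $maxmin_i(H^1) \leq maxmin_i(H)$ is the easy direction: since $H^1_{-i} \subseteq H_{-i}$ and $H^1_i \subseteq H_i$, every security computation in $H^1$ is a restricted version of one in $H$, but one must be slightly careful because shrinking $H_{-i}$ raises inner minima while shrinking $H_i$ lowers the outer maximum, so a direct monotonicity argument does not settle the direction outright. The clean way is to invoke Theorem~\ref{thm:scminimax}(i): since $H$ has a Nash equilibrium, $maxmin_i(H)$ equals the common Nash payoff $p_i(s^*)$ of player $i$, and since the Nash equilibrium $s^*$ of $H$ is also a Nash equilibrium of $H^1$ as established above, $maxmin_i(H^1) = p_i(s^*) = maxmin_i(H)$, again by Theorem~\ref{thm:scminimax}(i) applied to $H^1$. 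This reduces the whole value-equality to the Nash-payoff characterization, bypassing any delicate manipulation of nested $\min$/$\max$ expressions over the changed strategy sets.

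The step I expect to be the main obstacle is ensuring that the Nash equilibrium $s^*$ manufactured with undominated components genuinely survives \emph{simultaneously} for both players after a single round of elimination, rather than merely one player at a time. When I replace $s_1$ by an undominated $t_1$ to get $(t_1, s_2)$, I must still handle $s_2$; but after the replacement the second component is unchanged, and Lemma~\ref{lem:sc-sum1} can be applied again to $(t_1, s_2)$ with respect to player $2$, replacing $s_2$ by an undominated $t_2$ to obtain $(t_1, t_2)$, still a Nash equilibrium by the same lemma. The subtlety is that ``undominated in $H$'' for $t_1$ is exactly membership in $H^1_1$, and this membership is not disturbed by the subsequent change to the second coordinate, since weak domination of $t_1$ is a property quantified over \emph{all} of $H_{-1}$ and is independent of which particular second-player strategy we pair it with. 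Thus both components land in $H^1$, and the construction closes. Throughout I rely on WD-admissibility only to guarantee that an undominated dominating strategy exists at each replacement, which is precisely what lets the process terminate in one round.
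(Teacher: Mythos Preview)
Your argument is correct and is precisely the natural approach given the tools the paper sets up: replace each coordinate of a Nash equilibrium by an undominated dominator via WD-admissibility, use Lemma~\ref{lem:sc-sum1} twice to preserve the Nash property, observe that a Nash equilibrium of $H$ with components in $H^1$ is automatically a Nash equilibrium of $H^1$, and then read off the maxmin equality from Theorem~\ref{thm:scminimax}(i) applied to both $H$ and $H^1$. The paper itself does not spell out a proof but defers to a modification of Corollary~5 in \cite{AS21a}; your reconstruction matches what that modification would amount to.
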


\subsection{A bound on IEWDS}
\label{subsec:scbound}

We now move on to a discussion of extensive games. 
We say that an
extensive game $G$ is \bfe{WD-admissible} (respectively, \bfe{strictly competitive}) if $\Gamma(G)$ is WD-admissible (respectively, strictly competitive).
We write $\Gamma^{k}(G)$ instead of
$(\Gamma(G))^k$, $\Gamma_i(G)$ instead of $(\Gamma(G))_i$, and
$\Gamma^k_i(G)$ instead of $(\Gamma^k(G))_i$.  So
$\Gamma^{0}(G) = \Gamma(G)$.
Further, for a strictly competitive game $H = (H_1, H_2, p_1, p_2)$
with finitely many outcomes for each player $i$ we define the
following three sets:
$p_i^{\pmax}(H) := \max_{s \in S} p_i(s)$,
$\wmax_i(H) : =\{s_i \in H_i \mid \fa s_{-i} \in H_{-i} \:
p_i(s_i,s_{-i})=p_i^{\pmax}(H)\}$ and $\lose_{-i}(H) = \{s_{-i} \in
H_{-i} \mid \exists s_{i} \in H_{i} \: p_{i}(s_{i},s_{-i}) =
p^{\pmax}_{i}(H)\}$. By the assumption about $H$, $p_i^{\pmax}(H)$ is
finite.


We can then prove the following generalization of the crucial Lemma 1 and Theorem 1 from \cite{Ewe02}, where the proofs are analogous to that of Lemma 18 and Theorem 19 in \cite{AS21a}.

\begin{restatable}{lemma}{lemscLose}
  \label{lem:scLose}
  Let $G$ be a WD-admissible strictly competitive extensive
  game with finitely many outcomes.  For all $i \in \{1,2\}$ and for
  all $k \geq 0$, if $\wmax_i(\Gamma^k(G)) = \emptyset$ then 
    $\lose_{\ibar}(\Gamma^k(G)) \cap \Gamma^{k+2}_{-i}(G) = \ES$.
\end{restatable}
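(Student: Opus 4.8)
The plan is to transpose the finite-case argument (Lemma~18 of \cite{AS21a}, itself a reorganisation of Lemma~1 of \cite{Ewe02}) to the well-founded setting, replacing the finite minimax theorem by the value machinery of Theorem~\ref{thm:scminimax} and Lemma~\ref{lem:h1}, both of which already hold for games with finitely many outcomes. Write $H := \Gamma^k(G)$ and $m := p_i^{\pmax}(H)$. First I would record the value-theoretic reading of the hypothesis. Since $G$ has finitely many outcomes it has a Nash equilibrium (Theorem~\ref{thm:constantSPE}), and because $\Gamma(G)$ is strictly competitive and WD-admissible, Lemma~\ref{lem:h1} applied inductively guarantees that every $\Gamma^j(G)$ still has a Nash equilibrium and the same value $maxmin_i$. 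A strategy in $\wmax_i(H)$ is precisely a security strategy for player $i$ attaining $m$, so by Theorem~\ref{thm:scminimax} the assumption $\wmax_i(H)=\ES$ is equivalent to $maxmin_i(H) < m$. By Note~\ref{lem:scSamePayoff} the outcome at which player $i$ scores $m$ carries a single opponent payoff $\mu$, and strict competitiveness turns $maxmin_i(H) < m$ into $maxmin_{-i}(H) > \mu$.

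Next I would reduce the claim to a statement about this top value. Using Lemma~\ref{lem:weakPI}, any $s_{-i} \in \lose_{\ibar}(H) \cap H^2_{-i}$ admits a best response $t_i \in H^2_i$ with $p_i(t_i,s_{-i}) = m$; hence it suffices to show that the top value vanishes within two rounds, i.e. $p_i^{\pmax}(H^2) < m$. This is the inequality I would aim at, since it immediately yields $\lose_{\ibar}(H) \cap H^2_{-i} = \ES$.

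For the two rounds I would fix a security strategy $\sigma_{-i}$ for the opponent in $H^1$; by Lemma~\ref{lem:h1} it guarantees $maxmin_{-i}(H^1) = maxmin_{-i}(H) > \mu$. The goal of the second round is to show that $\sigma_{-i}$ weakly dominates every surviving losing strategy $s_{-i} \in \lose_{\ibar}(H) \cap H^1_{-i}$. The strict-improvement clause is immediate: Lemma~\ref{lem:weakPI} yields $t_i \in H^1_i$ with $p_i(t_i,s_{-i}) = m$, against which $\sigma_{-i}$ leaves player $i$ only $\mu$ and so is strictly better for the opponent. The ``never worse'' clause is the real content: it can fail only if some surviving $r_i \in H^1_i$ scores, against $s_{-i}$, strictly below the value, i.e. $p_i(r_i,s_{-i}) < maxmin_i(H)$.

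The crux — and the point at which the perfect-information structure is indispensable, since the corresponding purely strategic statement is false — is that every such obstructing $r_i$ has already been removed in the first round. Here I would reuse the backward-induction ``repair'' of the finite proof: modify $r_i$ on the subtree below the first node of $\play(r_i,s_{-i})$ where its continuation falls short of the subgame value, substituting value-optimal play, and argue that the result weakly dominates $r_i$ in $H$. Once the obstructions are gone, $\sigma_{-i}$ weakly dominates $s_{-i}$ in $H^1$, so $s_{-i} \notin H^2_{-i}$, and with the reduction this gives $p_i^{\pmax}(H^2) < m$ and hence $\lose_{\ibar}(H) \cap H^2_{-i} = \ES$. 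The step I expect to be hardest is exactly this repair: one must verify that switching to value-optimal play on an entire subtree is a weak improvement \emph{simultaneously against all} opponent strategies, not merely against the single profile $s_{-i}$, and that a value-achieving continuation exists at the relevant node. Both rest on finiteness of the outcome set (so the requisite maxima, minima and security strategies exist) rather than on finiteness of the tree, which is why the transfinite aspect stays inert here: within these two fixed rounds no limit ordinal is reached, and every extremal quantity is a max or min over the finite set of outcomes.
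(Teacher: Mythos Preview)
Your proposal is correct and takes essentially the same approach as the paper, which itself does not give a self-contained proof but states that the argument is ``analogous to that of Lemma~18 \ldots\ in \cite{AS21a}''---precisely the blueprint you are following. You correctly identify the ingredients that replace the finite minimax theorem (Theorem~\ref{thm:constantSPE}, Lemma~\ref{lem:h1}, Theorem~\ref{thm:scminimax}) and isolate the extensive-form repair step as the substantive point; the only refinement worth noting is that once the repaired strategy is constructed in $\Gamma(G)$ it need not lie in $H=\Gamma^k(G)$, so one more appeal to Lemma~\ref{lem:weakPI} is needed to pull the dominator back into $H_i$ before concluding that $r_i$ is eliminated.
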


  Lemma~\ref{lem:scLose} implies that if for all $i \in \{1,2\}$,
  $\wmax_i(\Gamma^k(G)) = \emptyset$ then two further rounds of
  eliminations of weakly dominated strategies remove from
  $\Gamma^k(G)$ at least two outcomes. 

This allows us to establish the following result. The proof is almost the same as the one given in \cite[Theorem 19]{AS21a} for the finite extensive games. We reproduce it here
for the convenience of the reader.

\begin{restatable}{theorem}{thmSciewds}
\label{thm:sciewds}
Let $G$ be a WD-admissible strictly competitive
extensive game with at most $m$ outcomes. Then $\Gamma^{m-1}(G)$ is a trivial
game.
\end{restatable}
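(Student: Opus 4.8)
The plan is to prove the bound $\Gamma^{m-1}(G)$ is trivial by induction on the number of outcomes $m$, using Lemma~\ref{lem:scLose} to show that each pair of elimination rounds strictly decreases the outcome count whenever the game is not yet trivial. First I would handle the base case: if $G$ has at most one outcome, then $\Gamma^0(G) = \Gamma(G)$ is already trivial and there is nothing to prove, matching $\Gamma^{m-1}(G) = \Gamma^0(G)$ for $m=1$. For the inductive step with $m \geq 2$ outcomes, the key dichotomy is whether $\wmax_i(\Gamma(G)) = \emptyset$ for \emph{both} players $i \in \{1,2\}$ or not.

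The main engine is Lemma~\ref{lem:scLose}, which tells us that if $\wmax_i(\Gamma^k(G)) = \emptyset$ then the strategies in $\lose_{\ibar}(\Gamma^k(G))$ are all eliminated within two further rounds. The remark following that lemma already packages the consequence I need: if $\wmax_i(\Gamma^k(G)) = \emptyset$ for both players, then two rounds of elimination remove at least two outcomes from $\Gamma^k(G)$. So the first route is: when both $\wmax$ sets are empty, after two rounds we drop from at most $m$ outcomes to at most $m-2$ outcomes, and I would apply the induction hypothesis to $\Gamma^2(G)$ to finish in $(m-2)-1 = m-3$ further steps, giving the total bound $2 + (m-3) = m-1$. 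The second route handles the case where $\wmax_i(\Gamma(G)) \neq \emptyset$ for some player $i$: here a maximal-payoff strategy for player $i$ weakly dominates (via the strictly competitive structure and Note~\ref{lem:scSamePayoff}) every strategy of $i$ that does not secure $p_i^{\pmax}$, so one round of elimination removes the worst outcome for player $i$ entirely, dropping to at most $m-1$ outcomes, and the induction hypothesis applied to $\Gamma^1(G)$ finishes in $(m-1)-1 = m-2$ further steps for a total of $1 + (m-2) = m-1$.

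The subtle point requiring care is that the induction hypothesis must be applicable to $\Gamma^2(G)$ (respectively $\Gamma^1(G)$), which means I must check that these games are again WD-admissible and strictly competitive with a Nash equilibrium so that their $\wmax$ sets remain well-behaved and the payoffs $p_i^{\pmax}$ stay finite. WD-admissibility is defined over all subgames and is therefore inherited by $\Gamma^k(G)$; strict competitiveness is clearly preserved under strategy elimination; and the existence of a Nash equilibrium is propagated through each elimination round by Lemma~\ref{lem:h1}, together with Theorem~\ref{thm:constantSPE} supplying the initial equilibrium since $G$ has finitely many outcomes. I would also invoke Lemma~\ref{lem:scTwo} as the ultimate base case at the bottom of the reduction: once only two outcomes remain in a strictly competitive game with a Nash equilibrium, a single round yields a trivial game.

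The hard part will be bookkeeping the count precisely so that the two routes both land on exactly $m-1$ and never overshoot — in particular, verifying that the outcome count genuinely drops by the claimed amount rather than merely weakly, and confirming that the elimination of the $\lose$ strategies indeed corresponds to the removal of distinct outcomes (not just strategies realizing already-counted outcomes). This is where the precise statement of Lemma~\ref{lem:scLose}, identifying $\lose_{\ibar}(\Gamma^k(G))$ as exactly the opponent strategies that permit player $i$'s maximal payoff, does the real work: removing them removes the outcome $p_i^{\pmax}$ from the reachable set, and by Note~\ref{lem:scSamePayoff} this simultaneously removes the corresponding extremal outcome for player $\ibar$, which is why two outcomes disappear in the both-empty case.
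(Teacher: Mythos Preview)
Your approach is essentially the paper's: the same induction on the number of outcomes and the same two-case split according to whether some $\wmax_i$ is nonempty. There is, however, one genuine formalization gap. You want to apply the induction hypothesis to $\Gamma^2(G)$ (respectively $\Gamma^1(G)$), but the theorem as stated is about an \emph{extensive} game $G$, and $\Gamma^2(G)$ is in general not of the form $\Gamma(G')$ for any extensive game $G'$. Checking that $\Gamma^2(G)$ inherits WD-admissibility, strict competitiveness, and a Nash equilibrium does not repair this, because Lemma~\ref{lem:scLose}---the engine you need at each inductive step---is stated for an extensive game $G$ (its proof uses the tree structure). The paper handles this by the standard device of strengthening the induction hypothesis: it proves that for all $m \geq 1$ and all $k \geq 0$, if $\Gamma^k(G)$ has at most $m$ outcomes then $\Gamma^{k+m-1}(G)$ is trivial. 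This keeps the \emph{same} extensive game $G$ throughout and merely shifts the stage index $k$; since Lemma~\ref{lem:scLose} is already quantified over $k$, it then applies at every step without further justification. Your proof becomes correct once reformulated this way.

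A secondary remark on your second route: when $\wmax_i(\Gamma^k(G)) \neq \emptyset$, one round does much more than remove the worst outcome for player $i$. Every strategy in $\wmax_i$ weakly dominates every strategy outside $\wmax_i$, so after one round player $i$'s remaining strategies are exactly $\wmax_i$ and his unique payoff is $p_i^{\pmax}$; by Note~\ref{lem:scSamePayoff} the game $\Gamma^{k+1}(G)$ is already trivial. The paper observes this and does not invoke the induction hypothesis in this case at all. Your weaker bookkeeping (drop to at most $m-1$ outcomes and recurse) still yields the bound $m-1$, so this is an imprecision rather than an error, but the sharper observation both simplifies the argument and shows why the base case $m=2$ (via Lemma~\ref{lem:scTwo}) is really the only place the Nash-equilibrium machinery of Lemma~\ref{lem:h1} and Theorem~\ref{thm:constantSPE} is needed.
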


\begin{proof}
We prove a stronger claim, namely that for all $m \geq 1$ and $k \geq 0$
if $\Gamma^{k}(G)$ has at most $m$ outcomes, then $\Gamma^{k+m-1}(G)$ is a
trivial game.  

We proceed by induction on $m$. For $m = 1$ the claim is trivial.
For $m = 2$ we first note that by Theorem~\ref{thm:constantSPE} and Lemma \ref{lem:h1} each
game $\Gamma^{k}(G)$ has a Nash equilibrium. So the claim follows by Lemma \ref{lem:scTwo}.
For $m > 2$ two cases arise.
\II

\NI
\emph{Case 1.} For some $i \in \{1,2\}$, $\wmax_i(\Gamma^k(G)) \neq \emptyset$.

For player $i$ every strategy $s_i \in \wmax_i(\Gamma^k(G))$ weakly
dominates all strategies $s_i' \notin \wmax_i(\Gamma^k(G))$ and no 
strategy in $\wmax_i(\Gamma^k(G))$ is weakly dominated. So 
the set of strategies of player $i$ in $\Gamma^{k+1}(G)$ equals
$\wmax_i(\Gamma^k(G))$ and consequently $p_i^{\pmax}(\Gamma^k(G))$ is his unique payoff
in this game. By  Note \ref{lem:scSamePayoff} $\Gamma^{k+1}(G)$, and hence also
$\Gamma^{k+m-1}(G)$, is a trivial game. 
\II

\NI
\emph{Case 2.} For all $i \in \{1,2\}$, $\wmax_i(\Gamma^k(G)) = \emptyset$.

Take joint strategies $s$ and $t$ such that
$p_1(s)=p_1^{\max}(\Gamma^k(G))$ and $p_2(t)=p_2^{\max}(\Gamma^k(G))$.
By Note \ref{lem:scSamePayoff} 
the outcomes $(p_1(s), p_2(s))$ and $(p_1(t), p_2(t))$ are different since $m > 1$.

We have $s_2 \in \lose_{2}(\Gamma^k(G))$ and $t_1 \in \lose_{1}(\Gamma^k(G))$.
Hence by Lemma \ref{lem:scLose} for no joint strategy $s'$ in $\Gamma^{k+2}(G)$
we have $p_1(s')=p_1^{\max}(\Gamma^k(G))$ or  $p_2(s')=p_2^{\max}(\Gamma^k(G))$.

So $\Gamma(G^{k+2})$ has at most $m-2$ outcomes.  By the induction
hypothesis $\Gamma(G^{k+m-1})$ is a trivial game.
\end{proof}

\II

We now show that Theorem \ref{thm:sciewds} holds for a large class of natural games.
Call an extensive game \bfe{almost constant} if for all but finitely
many leaves the outcome is the same.
Note that every almost constant game has finitely many outcomes, but
the converse does not hold. Indeed, it suffices to take a game with
two outcomes, each associated with infinitely many leaves. 
The following general result holds.

\begin{theorem} \label{thm:almost}
Every almost constant extensive game is WD-admissible.
\end{theorem}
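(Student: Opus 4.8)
The plan is to exploit the fact that, in an almost constant game, player $i$'s outcome against any opponent profile is determined by a \emph{finite} amount of data about his strategy; this collapses the potentially infinite weak-domination preorder on each player's strategy set to a finite partial order, in which maximal elements necessarily exist. Fix $G$, let $c$ be the outcome shared by all but finitely many leaves, and let $z_1, \ldots, z_m$ be the finitely many leaves whose outcome differs from $c$. For a strategy $s_i$ of player $i$ say that $s_i$ is \emph{consistent with} $z_k$ if at every node $u$ on the root-to-$z_k$ path with $\turn(u) = i$ the strategy $s_i$ moves towards $z_k$, and define the \emph{signature} $K(s_i) \sse \{1, \LL, m\}$ as the set of $k$ for which this holds. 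There are at most $2^m$ possible signatures.

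The key lemma I would establish is: if $K(s_i) = K(t_i)$, then $p(s_i, s_{-i}) = p(t_i, s_{-i})$ for \emph{every} opponent profile $s_{-i}$. Its proof rests on the elementary decomposition that $\play(s_i, s_{-i})$ ends in an exceptional leaf $z_k$ if and only if $k \in K(s_i)$ and $s_{-i}$ is itself consistent with $z_k$ — and the second condition depends on $s_{-i}$ alone. Hence, fixing $s_{-i}$: if $\play(s_i, s_{-i})$ reaches $z_k$, then $k \in K(s_i) = K(t_i)$ and $s_{-i}$ is consistent with $z_k$, which forces $\play(t_i, s_{-i})$ to reach $z_k$ as well; and if $\play(s_i, s_{-i})$ reaches a default leaf, then the same equivalence shows $t_i$ cannot reach any exceptional leaf against $s_{-i}$ either, so both outcomes equal $c$. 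Well-foundedness is used here, since it guarantees every root-to-leaf path is finite and hence that consistency is a condition on finitely many nodes.

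With the lemma in hand the conclusion is nearly immediate. Fix an arbitrary subgame $H'$ of $\Gamma(G)$ and a player $i$. By the lemma the map $s_i \mapsto f_{s_i}$, where $f_{s_i} := p_i(s_i, \cdot)$ restricted to $H'_{-i}$, takes at most $2^m$ distinct values on $H'_i$, and weak domination depends only on this value: $t_i$ weakly dominates $s_i$ exactly when $f_{t_i} \geq f_{s_i}$ pointwise on $H'_{-i}$ with strict inequality somewhere. Thus weak domination factors through the strict pointwise order on the finite set of realized payoff functions, which is a finite strict partial order; in such an order every element lies below a maximal one, and $s_i$ is undominated precisely when $f_{s_i}$ is maximal. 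Consequently either $s_i$ is undominated, or a strategy $t_i$ whose payoff function is maximal (and strictly above $f_{s_i}$) weakly dominates it and is itself undominated. Since $H'$ was arbitrary, this is exactly the WD-admissibility of $\Gamma(G)$.

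The main obstacle is the key lemma — verifying that identical signatures force identical outcomes against \emph{all} opponent profiles; everything after it is routine finite-poset reasoning. The delicate point is confirming that ``$\play(s_i, s_{-i})$ reaches $z_k$'' splits cleanly into the player-$i$ condition $k \in K(s_i)$ and an opponent-only consistency condition, so that replacing $s_i$ by a same-signature $t_i$ cannot change which (if any) exceptional leaf is reached.
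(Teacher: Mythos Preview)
Your proof is correct and takes a genuinely different route from the paper's. The paper argues by contradiction: it assumes an infinite weak-domination chain $s_i^0, s_i^1, \ldots$ in some subgame, picks for each $j$ a witness $s_{-i}^j$ with $p_i(s_i^j, s_{-i}^j) < p_i(s_i^{j+1}, s_{-i}^j)$, and then shows (via an observation about how $\leaf(\cdot)$ behaves when one coordinate of the joint strategy is held fixed) that the resulting leaves $\leaf(s_i^j, s_{-i}^j)$ and $\leaf(s_i^{j+1}, s_{-i}^j)$ are pairwise distinct, giving infinitely many leaves on which $p_i$ is non-constant and contradicting almost constancy. You instead show directly that the map $s_i \mapsto p_i(s_i,\cdot)$ factors through the finite signature $K(s_i) \sse \{1,\ldots,m\}$, so weak domination in any subgame descends to a finite poset and every chain has length at most $2^m$. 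Your key lemma is sound: the root-to-$z_k$ path is unique in a tree, and the play follows it precisely when each player moves along it at his own nodes, which is exactly the conjunction of ``$k \in K(s_i)$'' and ``$s_{-i}$ is consistent with $z_k$''; the uniqueness of the reached leaf then rules out $s_i$ and $t_i$ reaching different exceptional leaves against the same $s_{-i}$. What your approach buys is an explicit bound on chain length and a clean structural picture of why almost constancy forces finiteness; what the paper's approach buys is that it needs no auxiliary signature construction, only two short observations about leaves and almost constant functions.
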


\begin{proof}
We begin with two unrelated observations. Call a function $p: A \to B$ \bfe{almost constant} if for some $b$ we
have $p(a) = b$ for all but finitely many $a \in A$.

\medskip
\NI
\emph{Observation 1.} Consider two
sequences of some elements $(v_0, v_1, \LL)$ and $(w_0, w_1, \LL)$
such that $v_j \neq v_k$, $v_j \neq w_k$, and $w_j \neq w_k$ for
all $j \geq 0$ and $k > j$, and a function
$
p: \{v_0, v_1, \LL \} \cup \{w_0, w_1, \LL \} \to B
$
such that $p(v_j) \neq p(w_j)$ for all $j \geq 0$.
Then $p$ is not almost constant.

\noindent Indeed, otherwise for some $k \geq 0$ the function
$
  p: \{v_{k}, v_{k+1}, \LL \} \cup \{w_{k}, w_{k+1}, \LL \} \to B
$
would be constant.

\medskip
\NI
\emph{Observation 2.} Take an extensive game.  For some player $i$,
consider two joint strategies $(s_i, s_{-i})$ and $(s_i',s'_{-i})$.
If $\leaf(s_i,s_{-i}) = \leaf(s'_i,s'_{-i})$ then
$\leaf(s_i,s_{-i}) = \leaf(s'_i,s_{-i})$.

Indeed, consider any node $w$ in $\mathit{play}(s_i,s_{-i})$ such
that $\mathit{turn}(w)=i$. Then by assumption $s_i(w)=s'_i(w)$. This
implies that $\mathit{play}(s_i,s_{-i}) = \mathit{play}(s'_i,s_{-i})$,
which yields the claim.

Now consider an almost constant extensive game $G$.  Take an arbitrary
subgame $H$ of $\Gamma(G)$.  Suppose by contradiction that for some
player $i$ there exists an infinite sequence of strategies $s_i^0,
s_i^1, s_i^2, \ldots$ such that for all $j \geq 0$, $s_i^{j+1}$ weakly
dominates $s_i^{j}$ in $H$. By definition of weak dominance, for all
$j \geq 0$ there exists $s^j_{-i} \in H_{-i}$ such that
$p_i(s_i^j,s_{-i}^j) < p_i(s_i^{j+1},s_{-i}^j)$.
Let for $j \geq 0$, $v_j = \leaf(s_i^j,s_{-i}^j)$ and
$w_j = \leaf(s_i^{j+1},s_{-i}^j)$.  By the above inequalities
$p_i(v_j) \neq p_i(w_j)$ for all $j \geq 0$.  

We now argue that $v_j \neq v_k$, $v_j \neq w_k$, and $w_j \neq w_k$
for all $j \geq 0$ and $k > j$.  First, note that by the transitivity of
the `weakly dominates' relation we have the following.

\begin{itemize}
\item $p_i(s_i^j, s_{-i}^j) <  p_i(s_i^{j+1}, s_{-i}^j) \leq p_i(s_i^{k},s_{-i}^j)$,
\item  $p_i(s_i^j, s_{-i}^j) < p_i(s_i^{j+1}, s_{-i}^j) \leq  p_i(s_i^{k+1},s_{-i}^j)$,
\item  $p_i(s_i^{j+1}, s_{-i}^k) \leq p_i(s_i^{k},s_{-i}^k) < p_i(s_i^{k+1},s_{-i}^k)$.
\end{itemize}
%
This implies in turn,
$\leaf(s_i^{j},s_{-i}^j) \neq \leaf(s_i^{k},s_{-i}^j)$,
$\leaf(s_i^{j},s_{-i}^j) \neq \leaf(s_i^{k+1},s_{-i}^j)$, and
$\leaf(s_i^{j+1}, s_{-i}^k) \neq \leaf(s_i^{k+1},s_{-i}^k)$.
So by Observation 2
we have the following.
\begin{itemize}
\item $v_j = \leaf(s_i^{j},s_{-i}^j) \neq \leaf(s_i^{k},s_{-i}^k) = v_k$,
\item $v_j = \leaf(s_i^{j},s_{-i}^j) \neq \leaf(s_i^{k+1},s_{-i}^k) = w_k$,
\item $w_j = \leaf(s_i^{j+1},s_{-i}^j) \neq \leaf(s_i^{k+1},s_{-i}^k) = w_k$.
\end{itemize}

By Observation 1, $p_i$ is not almost constant,
which contradicts the assumption that $G$ is almost constant.  By the transitivity of the `weakly
dominates' relation we conclude that $G$ is WD-admissible.
\end{proof}

\begin{corollary}
\label{cor:almostsc}
Let $G$ be an almost constant strictly competitive
extensive game with at most $m$ outcomes. Then $\Gamma^{m-1}(G)$ is a trivial
game.
\end{corollary}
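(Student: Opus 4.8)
The plan is to observe that this statement is a direct corollary obtained by chaining two results already established in the excerpt, so essentially no new work is required. First I would note that the hypotheses of the corollary are that $G$ is almost constant, strictly competitive, and has at most $m$ outcomes. The key move is to upgrade the qualitative hypothesis ``almost constant'' into the structural hypothesis ``WD-admissible'' needed to invoke the main bound.

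Concretely, the first step is to apply Theorem \ref{thm:almost}, which states that every almost constant extensive game is WD-admissible. Since $G$ is almost constant by assumption, this immediately yields that $G$ is WD-admissible. Combining this with the other two standing hypotheses, $G$ is now seen to be a WD-admissible strictly competitive extensive game with at most $m$ outcomes.

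The second step is then to apply Theorem \ref{thm:sciewds} verbatim: for any WD-admissible strictly competitive extensive game with at most $m$ outcomes, $\Gamma^{m-1}(G)$ is a trivial game. The conclusion follows directly. One consistency check worth recording is that ``almost constant'' already forces finitely many outcomes (as remarked just before Theorem \ref{thm:almost}), so the assumption of a finite bound $m$ on the number of outcomes is compatible with the almost constant hypothesis and the chain of implications is well-posed.

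I do not anticipate any genuine obstacle here, since the corollary is purely a specialization: the entire content has been absorbed into Theorems \ref{thm:almost} and \ref{thm:sciewds}. The only thing to be careful about is making explicit that ``almost constant'' is being used solely to secure WD-admissibility, while ``strictly competitive'' and ``at most $m$ outcomes'' are passed unchanged into the bound of Theorem \ref{thm:sciewds}; there is no hidden interaction between these hypotheses that needs separate verification.
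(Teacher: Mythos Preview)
Your proposal is correct and matches the paper's intended reasoning: the corollary is stated without proof precisely because it follows immediately by combining Theorem~\ref{thm:almost} (almost constant $\Rightarrow$ WD-admissible) with Theorem~\ref{thm:sciewds}. There is nothing to add.
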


\subsection*{Acknowledgments}
We thank the reviewers for their helpful comments. The second author
was partially supported by the grant CRG/2022/006140.

 \bibliographystyle{eptcs}

\bibliography{ref-s,e}

\end{document}